\documentclass[letterpaper]{article}
\pdfoutput=1
\pdfminorversion=7 



\usepackage[T1]{fontenc}
\usepackage[utf8]{inputenc}
\usepackage{csquotes}
\usepackage[english]{babel}

\usepackage{microtype}

\usepackage{bbm} 

\usepackage{bm}

\usepackage{graphicx}

\usepackage{amssymb,amsmath,amsthm}

\usepackage{color}
\usepackage[dvipsnames]{xcolor}

\usepackage[%
    backend=biber,
    style=alphabetic,
    sorting=nyt,
    isbn=true,
    maxbibnames=10,
    mincrossrefs=99 
]{biblatex}
\addbibresource{bibliography.bib}

\usepackage{authblk}

\DeclareSourcemap{%
	\maps[datatype=bibtex]{%
		\map[overwrite]{%
			\step[fieldsource=doi, final]
			\step[fieldset=url, null]
			\step[fieldset=eprint, null]
		}
	}
}

\usepackage{amssymb,amsmath,amsthm}

\usepackage{mathtools}
\mathtoolsset{centercolon}
\DeclareMathOperator{\poly}{poly}

\DeclareMathOperator{\rt}{rt}
\DeclareMathOperator{\rs}{rs}
\DeclareMathOperator{\tr}{Tr}

\DeclareMathOperator{\dom}{domain}
\DeclareMathOperator{\img}{image}
\DeclareMathOperator{\Ex}{\mathbf{E}}
\DeclareMathOperator{\Prob}{P}



\DeclarePairedDelimiter\ket{\vert}{\rangle}

\DeclarePairedDelimiterX\ketbra[2]{\vert}{\vert}{#1 {\delimsize\rangle\langle} #2}
\DeclarePairedDelimiter\set{\{}{\}}
\DeclarePairedDelimiter\abs{\lvert}{\rvert}

\DeclarePairedDelimiterXPP\bigo[1]{O}{(}{)}{}{#1}
\DeclarePairedDelimiterXPP\littleo[1]{o}{(}{)}{}{#1}
\DeclarePairedDelimiterXPP\bigtildeo[1]{\tilde O}{(}{)}{}{#1}
\DeclarePairedDelimiterXPP\bigomega[1]{\Omega}{(}{)}{}{#1}
\DeclarePairedDelimiterXPP\bigtheta[1]{\Theta}{(}{)}{}{#1}
\DeclarePairedDelimiterXPP\diagonal[1]{\diag}{(}{)}{}{#1}
\DeclarePairedDelimiterXPP\rnumber[1]{\rt}{(}{)}{}{#1}
\DeclarePairedDelimiterXPP\rsize[1]{\rs}{(}{)}{}{#1}
\DeclarePairedDelimiterXPP\hierprod[2]{Π_{#1}}{(}{)}{}{#2}
\DeclarePairedDelimiterXPP\domain[1]{\dom}{(}{)}{}{#1}
\DeclarePairedDelimiterXPP\image[1]{\img}{(}{)}{}{#1}
\DeclarePairedDelimiterXPP\trace[1]{\tr}{[}{]}{}{#1}
\DeclarePairedDelimiterXPP\ptrace[2]{\tr_{#1}}{[}{]}{}{#2}
\DeclarePairedDelimiterXPP\probability[1]{\Prob}{[}{]}{}{#1}
\DeclarePairedDelimiterXPP\expectation[1]{\Ex}{[}{]}{}{#1}

\makeatletter
\DeclareRobustCommand{\rvdots}{%
	\vbox{
	  	\baselineskip4\p@\lineskiplimit\z@
	  	\kern-\p@
	  	\hbox{.}\hbox{.}\hbox{.}
}}
\makeatother
\makeatletter
\def\Ddots{\mathinner{\mkern1mu\raise\p@
\vbox{\kern7\p@\hbox{.}}\mkern2mu
\raise4\p@\hbox{.}\mkern2mu\raise7\p@\hbox{.}\mkern1mu}}
\makeatother


\newcommand*{\idm}{\ensuremath{\mathbbm{1}}}
\renewcommand{\vec}[1]{\ensuremath{\mathbf{#1}}}

\usepackage{booktabs}

\usepackage[lined,algoruled,algosection,linesnumbered,noend]{algorithm2e}
\SetAlCapSkip{0.5em}
\SetEndCharOfAlgoLine{} 
\SetKwInOut{Data}{Data}
\SetKwInOut{Input}{Input}
\SetKwInOut{Output}{Output}
\SetKwFor{For}{for}{\string:}{endfor}
\SetKwFor{While}{while}{\string:}{endw}
\SetKwFor{ForEach}{foreach}{\string:}{endfch}

\SetFuncArgSty{}
\SetArgSty{}

\SetKwProg{Fn}{function}{\string:}{endfunction}
\SetKwFunction{execute}{execute}


\usepackage[%
	labelfont=bf
]{caption}
\usepackage{subcaption}

\definecolor{dark-red}{rgb}{0.4,0.15,0.15}
\definecolor{dark-blue}{rgb}{0.15,0.15,0.4}
\definecolor{medium-blue}{rgb}{0,0,0.5}

\definecolor{mycomment}{rgb}{0.3,0.7,0.8}
\definecolor{mygray}{rgb}{0.5,0.5,0.5}
\definecolor{lightgray}{rgb}{0.95,0.95,0.95}
\definecolor{mymauve}{rgb}{0.58,0,0.82}

\usepackage[unicode=true]{hyperref}
\hypersetup{%
	breaklinks=true,
	colorlinks,
	linkcolor={dark-blue},
	citecolor={dark-blue},
	urlcolor={medium-blue},
	pdfauthor = {Eddie Schoute},
	pdfpagemode=UseOutlines,
	pdfborder={0 0 0},
	pdfcreator={},
	pdfproducer={}
}

\usepackage{enumitem}
\newlist{ienumerate}{enumerate*}{1}
\setlist*[ienumerate,1]{%
    label=(\roman*),
}

\usepackage{pgfplots}
\pgfplotsset{compat=newest}
\usepgfplotslibrary{fillbetween}
\usepackage{tikz}
\usetikzlibrary{shapes.geometric}
\usetikzlibrary{backgrounds}
\usepackage{tikzscale}

\usepackage[nameinlink,capitalise]{cleveref}
\crefname{figure}{Figure}{Figures}
\crefname{equation}{}{} 
\Crefname{equation}{Equation}{Equations}

\newtheorem{theorem}{Theorem}[section]
\newtheorem{lemma}[theorem]{Lemma}
\newtheorem{corollary}[theorem]{Corollary}
\theoremstyle{definition}
\newtheorem{definition}[theorem]{Definition}

\newcommand{\cnot}[0]{{\textsc{cnot}}}

\newcommand{\cknot}[1]{\ensuremath{\textsc{c}^{#1}\kern-0.1em\textsc{not}}}

\newcommand{\sw}[0]{{\textsc{swap}}} 

\newcommand{\MaxFlow}[0]{{\textsc{Max Flow}}}

\title{Surface code compilation via edge-disjoint paths }

\author[1]{Michael Beverland}
\author[1]{Vadym Kliuchnikov}
\author[2,3,4]{Eddie Schoute\thanks{\href{mailto:eschoute@lanl.gov}{eschoute@lanl.gov}}}
\affil[1]{Microsoft Quantum}
\affil[2]{Joint Center for Quantum Information and Computer Science, University of Maryland}
\affil[3]{Institute for Advanced Computer Studies, University of Maryland}
\affil[4]{Department of Computer Science, University of Maryland}

\begin{document}
\maketitle

\begin{abstract}
	We provide an efficient algorithm to compile quantum circuits for fault-tolerant execution.
	We target surface codes, which form a 2D grid of logical qubits with nearest-neighbor logical operations.
	Embedding an input circuit's qubits in surface codes can result in long-range two-qubit operations across the grid.
	We show how to prepare many long-range Bell pairs on qubits connected by edge-disjoint paths of ancillas in constant depth that can be used to perform these long-range operations.
	This forms one core part of our \emph{Edge-Disjoint Paths Compilation} (EDPC) algorithm,
	by easily performing many parallel long-range Clifford operations in constant depth.
	It also allows us to establish a connection between surface code compilation and several well-studied edge-disjoint paths problems.
	Similar techniques allow us to perform non-Clifford single-qubit rotations
	far from magic state distillation factories.
	In this case, we can easily find the maximum set of paths by a max-flow reduction, which forms the other major part of EDPC.
	EDPC has the best asymptotic worst-case performance guarantees on the circuit depth for compiling
	parallel operations when compared to related compilation methods based on \sw{}s and network coding.
	EDPC also shows a quadratic depth improvement over sequential Pauli-based compilation
	for parallel rotations requiring magic resources.
	We implement EDPC and find significantly improved performance for circuits built from parallel \cnot{}s,
    and for circuits which implement the multi-controlled $X$ gate \cknot{k}.
\end{abstract}


\clearpage

\tableofcontents

\clearpage

\section{Introduction}
\label{sec:intro}
Quantum hardware will always be somewhat faulty and subject to decoherence, due to inevitable fabrication imperfections and the impossibility of completely isolating physical systems.
For large computations it becomes a certainty that faults will occur among the many qubits and operations involved.
\emph{Fault-tolerant quantum computation} (FTQC) can be implemented despite this by encoding the information in a quantum error correcting code and applying logical operations which are carefully designed to process the encoded information with an acceptably low effective error rate.

The surface code~\cite{Kitaev2003,Bravyi1998} provides a promising approach to implement FTQC.
Firstly, it can be implemented using geometrically local operations on a patch of qubits in a 2D grid,  which is the natural setting for many hardware platforms including superconducting~\cite{Fowler2012,Chamberland2020b} and Majorana~\cite{Karzig2017} qubits.
Secondly, the logical qubits it encodes remain protected even for relatively high noise rates, with a threshold of around 1\%~\cite{Wang2011}.
Thirdly, a sufficiently general set of elementary logical operations can be performed fault tolerantly on qubits encoded in the surface code using \emph{lattice surgery}~\cite{Horsman2012}.  
By tiling the plane with surface code patches, a 2D grid of logical qubits is formed, where the elementary operations are geometrically local; see \cref{fig:logical-layout}.
When combined with magic state distillation~\cite{Bravyi2005} these operations become universal for quantum computing.
Indeed this approach, which we will refer to as the \emph{surface code architecture}, is seen as among the most promising by many research groups and companies working in quantum computing~\cite{Fowler2012,Chao2020,YoderIBM2017,Fruchtman2016}.

\begin{figure}[h]
	\centering
	\includegraphics[width=0.85\textwidth]{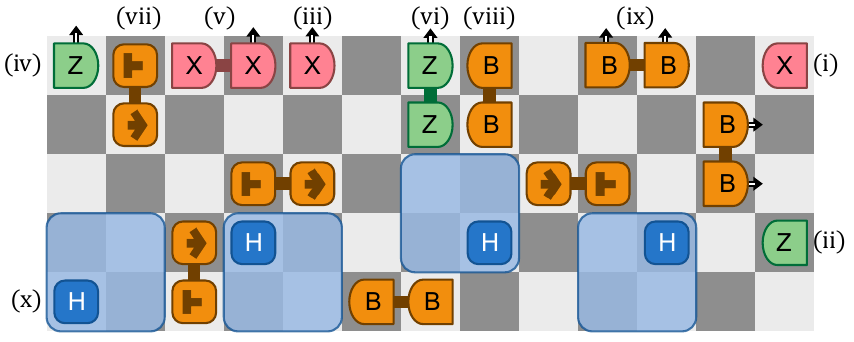}
	\caption{%
		Logical qubits (light and dark gray patches) encoded in the surface code form a 2D grid. 
		The elementary operations can be applied on any lattice translations of those shown. 
		Their times in units of surface code logical time steps are as follows. 
		$\mathit 0$ \textit{logical time steps:} Single-qubit preparation in the $X$ basis (i), and the $Z$ basis	(ii).
		Destructive single-qubit measurement, which moves the patch outside of the code space,
		in the $X$ basis (iii), and the $Z$ basis (iv) take $0$ steps.
		$\mathit 1$ \textit{logical time step:} Two-qubit measurement of $XX$ (v) and $ZZ$ (vi).
		A move of a logical qubit from one patch to an unused patch (vii).	
		Two-qubit preparation (viii) and destructive measurement (ix) in the Bell basis.
		$\mathit 3$ \textit{logical time steps:} A Hadamard gate, which uses three ancilla patches (x). See \cref{sec:surfaceCode} for further details.
	}\label{fig:logical-layout}
\end{figure}

In this work, we seek to minimize the resources required to fault-tolerantly implement a quantum algorithm using the surface code architecture, which we will refer to as the \textit{surface code compilation problem}.
For concreteness, we will assume that the input quantum algorithm is expressed as a quantum circuit composed of preparations and destructive measurements of individual qubits in the $Z$ or $X$ basis, controlled-not (\cnot{}), Pauli-$X$, -$Y$, and -$Z$, Hadamard ($H$), Phase ($S$) and $T$ gates.
Our results can be easily generalized to broader classes of input quantum circuits.
The output is the quantum algorithm executed using the elementary logical surface code operations shown in \cref{fig:logical-layout}.
Ultimately, we would like to minimize the \emph{physical space-time cost},
which is the product of the number of physical qubits and the time required to run an algorithm.
To avoid implementation details, we instead minimize the more abstract \emph{logical space-time cost}, which is the number of logical qubits (the circuit width) multiplied by the number of logical time steps (the circuit depth) of the algorithm expressed in elementary surface code operations.
The logical and physical space-time costs are expected to be 1-to-1 and monotonically related (see \cref{sec:logical-physical-cost}), such that minimizing the former should minimize the latter. 

\begin{figure}
	\begin{subfigure}[b]{0.5\textwidth}
		\centering
		\includegraphics[width=\textwidth]{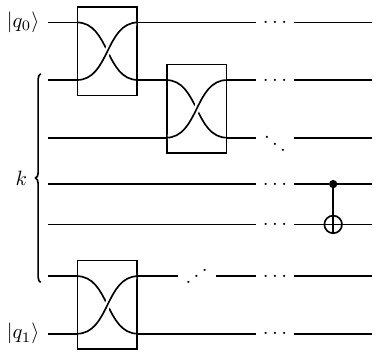}
		\caption{With \sw{} operations}\label{fig:compareSwap}
	\end{subfigure}%
	\begin{subfigure}[b]{0.5\textwidth}
		\centering
		\includegraphics[scale=0.75]{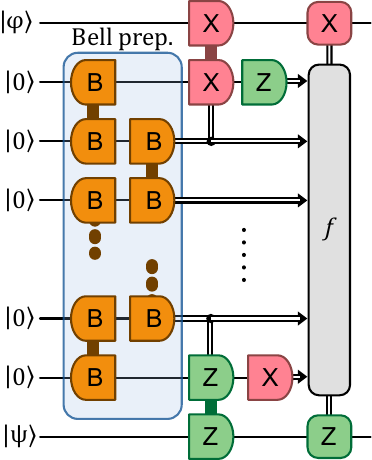}
		\caption{With Bell pair}\label{fig:compareTeleportation}
	\end{subfigure}
	\caption{%
		Application of a \cnot{$(q_0,q_1)$} on qubits at distance $k+1$ using surface code operations in two ways.
		(a) Using a \sw{}-based approach requires $\bigomega{n}$ depth using operations from \cref{fig:logical-layout},
		while (b) generating and consuming a Bell pair~\cite{Litinski2017} can be implemented in constant depth.
		The classical function $f$ computes Pauli corrections on the output qubits.
        }\label{fig:compareSwapTeleportation}
\end{figure}

A well-established approach to implement surface code compilation is known as sequential Pauli-based computation~\cite{Litinski2019}, where non-Clifford operations are implemented by injection using Pauli measurements, and Clifford operations are conjugated through the circuit until the end.
The circuit that is run in this approach then consists of a sequence of high-weight Pauli measurements which can have overlapping support leading them to be measured one after the other.
For large input circuits this can be problematic because highly parallel input circuits can become serialized with prohibitive runtimes. 

A major challenge to solve the surface code compilation problem is that quantum algorithms typically involve operations between logical qubits that are far apart when laid out in a 2D grid.
One approach to deal with a long-range gate is to swap logical qubits around until the pair of interacting qubits are next to one-another~\cite{Childs2019}.
However, this can result in a deep circuit, see \cref{fig:compareSwap}.
A more efficient approach is to create long-range entanglement by producing Bell pairs, which for example can be used to implement a long-range \cnot{} with a constant-depth circuit~\cite{Briegel1998,Litinski2017,JavadiAbhari2017} (see \cref{fig:compareTeleportation}).
Both of these approaches can be implemented with the elementary operations of the surface code.

Moreover, algorithms typically consist of many long-range operations that can ideally be performed in parallel.
For swap-based approaches, this can be done by considering a permutation of the logical qubits which is implemented by a sequence of swaps~\cite{Lao2018,Murali2019,Zulehner2019}. 
Finding these \sw{} circuits reduces to a routing problem on graphs~\cite{Childs2019,Steiger2018}.
There are efficient algorithms that solve this problem for certain families of graphs~\cite{Alon1994,Childs2019},
but finding a minimal depth solution is NP-hard in general~\cite{Banerjee2017}.
Alternatively, \emph{linear network coding} can be used to prepare many long-range pairs in constant depth~\cite{Leung2010,Kobayashi2009,Kobayashi2011,Satoh2012,Hahn2019,Beaudrap2020a},
and then these Bell pairs can be used to implement operations on pairs of distant qubits.
But a major barrier for using linear network coding is the lack of known efficient algorithms to find linear network codes.

In this paper, we provide a solution to the surface code compilation problem which generalizes
the use of entanglement for long-range \cnot{}s discussed above to the implementation of many long-range operations in parallel. 
In particular, we propose the Edge-Disjoint Paths Compilation (EDPC) algorithm,
which is a computationally efficient classical algorithm tailored to the elementary operations of the surface code architecture.
We find evidence that our EDPC algorithm significantly outperforms other
approaches by performing a detailed cost analysis for the execution of a set of quantum circuits benchmarks.

EDPC reduces the problem of executing quantum circuits to problems in graph theory.
Logical qubits correspond to graph vertices,
and there is an edge between qubits if elementary surface code operations can be applied between them.
We show how to perform multiple long-range \cnot{}s in constant depth
along a set of edge-disjoint paths (EDP) in the graph. 
In other words, long-range \cnot{}s can be performed simultaneously, in one round, if their controls and targets are connected by edge-disjoint paths.
This leads to the well-studied problem of finding maximum EDP sets~\cite{Kleinberg1996}.
The ability to perform long-range \cnot{}s
along with the elementary operations allows compilation of Clifford operations.
We also give a construction for EDP sets that are asymptotically optimal in the depth
of worst-case sets of independent \cnot{}s.

\begin{table}
    \centering
    \begin{tabular}{lrrr}\toprule
        & \multicolumn{3}{c}{Input circuit (compiled depth)} \\ \cmidrule(l){2-4}
      Algorithm & 1 \cnot{} & $n/2$ parallel \cnot{}s  & $k$ parallel rotations \\ \midrule
      Sequential Pauli & 0 & 0 & $\bigtheta{k}$ \\
      \sw{} & $\bigtheta{\sqrt n}$ & $\bigtheta{\sqrt n}$ & $\bigtheta{\sqrt n}$ \\
      Network coding & $\bigtheta{1}$ & $\bigomega{\sqrt n}$ & $\bigomega{\sqrt k}$ \\
      \textbf{EDPC} & $\bm{{\bigtheta{1}}}$ & $\bm{{\bigtheta{\sqrt n}}}$ & $\bm{{\bigtheta{\sqrt k}}}$ \\
      \bottomrule%
    \end{tabular}
    \caption{%
        A comparison in the depth of surface code compilation algorithms
        (that use $\bigtheta{n}$ space) for various input circuits of width $n$.
        We compare the worst-case performance for a single long-range \cnot{} gate,
        for \cnot{} circuits with $n/2$ parallel \cnot{} gates,
        and for $k$ rotations, with $k \in \mathbb N$, that need to be performed at the boundary.
    }\label{tab:comparison}
\end{table}

The final operations that complete our gate set for universal quantum computation with the surface code are $T$ gates.
The $T$ gates are not natural operations on the surface code, but can be implemented fault-tolerantly by consuming specialized resource states, called \emph{magic states}. 
Magic states can be produced using a highly-optimized process called magic state distillation,
which we assume occurs independently of the computation on our code.
We assume that logical magic states are available in a specified region of the grid.
EDPC reduces magic state delivery to simple \MaxFlow{} instances that have known efficient algorithm~\cite{Ford1956}.
We compare the depth of input circuits compiled using surface code compilation algorithms
in the literature and EDPC in \cref{tab:comparison}.

The outline of the paper is as follows.
In \cref{sec:circuit-components}, we construct key higher-level components from the basic surface code operations in \cref{fig:logical-layout} including simple long-range operations.
These long-range operations allow us to perform many parallel \cnot{} operations
given vertex-disjoint and edge-disjoint paths that connect the data qubits in \cref{sec:edge-disjoint-path-algos}.
Because of its importance to the algorithms,
there we also compare the state of the art graph algorithms for finding vertex-disjoint or edge-disjoint sets of paths
and analyze their relation to our algorithms.
We complete our gate set by giving an algorithm for efficient remote rotations using magic states at the boundary in \cref{sec:magicState}.
Putting parallel long-range \cnot{} and remote rotations together,
we construct our circuit compilation algorithm, EDPC, in \cref{sec:EDPCalogrithm}.
Finally, we compare the performance of EDPC to prior surface code compilation work in \cref{sec:results},
note its connections to network coding,
and give numerical results comparing the space-time performance with a \sw{}-based compilation algorithm.

\section{Key circuit components from surface code operations}\label{sec:circuit-components}
Recall that our goal in this work is to develop an efficient classical compilation algorithm which re-expresses a quantum algorithm into one that uses the elementary operations of the surface code with a low logical space-time cost.
In \cref{sec:surfaceCode} we give an overview of the surface code
and justify the resource costs of the elementary operations shown in \cref{fig:logical-layout}.
The initial quantum algorithm is assumed to be expressed as a circuit diagram involving preparations and measurements of individual qubits in the computational basis, controlled-not (\cnot{}), Pauli-$X$, -$Y$, and -$Z$, Hadamard ($H$), Phase ($S$) and $T$ gates.
In this section we build and calculate the cost of some key circuit components from the elementary surface code operations in \cref{fig:logical-layout}.
The contents of this section are reproductions or straightforward extensions of previously-known circuits.

\subsection{Single-qubit operations}\label{sec:natural-gates}
Some of the operations of the input circuit can be implemented directly with elementary surface code operations, namely the preparation and measurement of individual qubits in the measurement basis, and the Hadamard gate (provided three neighboring ancillary patches are available as ancillas, see \cref{fig:logical-layout}).
Pauli operations do not need to be implemented at all since 
they can be commuted through Clifford gates and arbitrary Pauli gates~\cite{Knill2005}
and can therefore be tracked classically and merged with the final measurements.
For this reason, while we occasionally explicitly provide the Pauli corrections where instructive, we often show equivalence of two circuits only up to Pauli corrections.
The remaining single-qubit operations in the input circuit, namely the $S$ and $T$ gates, can be implemented using magic states and is addressed in \cref{sec:magicState}.

\subsection{Local \cnot{} and \sw{} gates}
An important circuit component is the \cnot{} gate, which can be implemented as shown in \cref{fig:cnotGate}~\cite{Zilberberg2008}. 
The qubits involved in this example are stored in adjacent patches,
i.e., it is local.
Another useful operation is a \sw{} of a pair of qubits stored in nearby patches. 
The surface code's move operation shown in \cref{fig:logical-layout}
gives a straightforward way to implement this as shown in \cref{fig:swap-via-move}.
With these implementations, the \cnot{} requires one ancilla patch, while \sw{} requires two.
Both are depth $2$.

\begin{figure}[h!]
	\centering
    \begin{subfigure}[t]{0.4\textwidth}
        \centering
       	\includegraphics[scale=0.6]{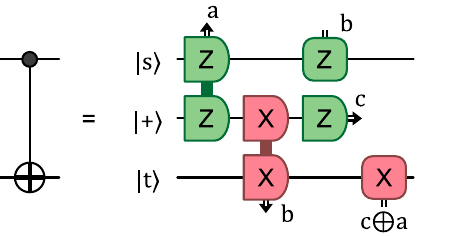}
        \caption{\cnot{} gate}
        \label{fig:cnotGate}
    \end{subfigure}\hfill%
    \begin{subfigure}[t]{0.59\textwidth}
        \centering
        \includegraphics[scale=0.6]{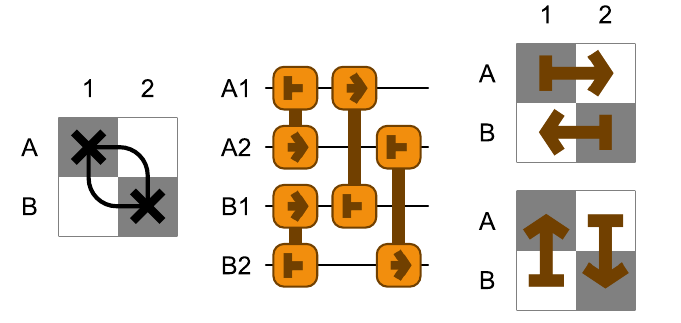}
        \caption{\sw{} gate}
        \label{fig:swap-via-move}
    \end{subfigure}
	\caption{%
		A \cnot{} gate can be implemented in depth $2$ using $ZZ$ and $XX$ joint measurements
		with a $\ket +$ ancilla state,
		followed by classically controlled Pauli corrections.
		The \sw{} gate can be implemented using four move operations and two ancillas in depth $2$.
	}
\end{figure}

\subsection{Long-range \cnot{} using \sw{} gates}\label{sec:longRangeCnotSwap}
\begin{figure}[h]
		\centering
		\includegraphics[width=\textwidth]{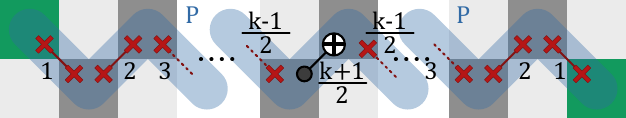}
		\caption{%
		A non-local \cnot{} can be implemented using \sw{}s which takes depth $2\lceil\frac{k-1}{2}\rceil$ using a zig-zag of ancilla patches along the path $P$ of length $k$.
		The figure shows the case when $k$ is odd and \sw{}s depth is $2(k-1)$.
		The patches on the path can store other logical information, which will simply be moved during the \sw{} gates. 
		The patches adjacent to the path are ancillas which are used to implemented the \sw{} gates.
		}\label{fig:long-range-cnot-swap}
\end{figure}

Typical input circuits for surface code compilation will involve \cnot{} operations on pairs of qubits that are far apart after layout.
A very intuitive approach to apply a long-range \cnot{$(q_1, q_2)$} gate is shown in \cref{fig:long-range-cnot-swap}.
This involves making use of \sw{} gates to first move the qubits $q_1$ and $q_2$ so that they are near one another, and then use the local \cnot{} gate in \cref{fig:cnotGate}.
Let the path $P=v_1 v_2 \dots v_k$, for $k \in \mathbb N$, where $v_1 = q_1$ and $v_k = q_2$.
As each swap has depth $2$, we get a circuit of depth $2\lceil\frac{k-1}{2}\rceil$ 
since we can perform \sw{}s on either end simultaneously.
Afterwards, the two qubits are adjacent and we simply perform a \cnot{} in depth $2$.

A lower bound on the depth it takes to perform a long-range \cnot{} gate using swaps
is proportional to the length of the shortest $q_1$-$q_2$ path.
To move a qubit $k$ patches using \sw{}s takes depth exactly $2k$.
Therefore, to move control and target to the middle of the shortest path connecting them,
it must take time proportional to at least half the length of the path.

\subsection{Long-range \cnot{} using a Bell pair}\label{sec:longRangeCnot}
\begin{figure}[h]
	\centering
    \begin{subfigure}[t]{0.47\textwidth}
        \centering
        \includegraphics[scale=0.7]{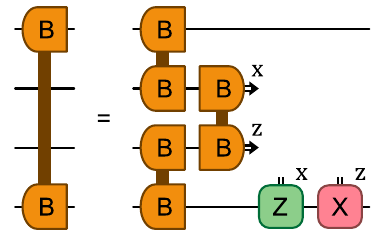}
        \caption{Preparing a longer-range Bell pair}
        \label{fig:longer-range-bell}
    \end{subfigure}\hfill%
    \begin{subfigure}[t]{0.47\textwidth}
        \centering
        \includegraphics[scale=0.55]{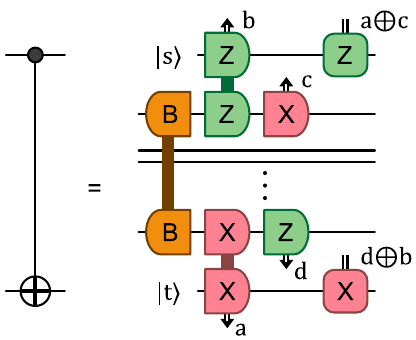}
        \caption{\cnot{} by consuming a Bell pair}\label{fig:cnot-via-bell}
    \end{subfigure}
    \par\bigskip
	\begin{subfigure}[t]{0.47\textwidth}
		\centering
		\includegraphics[width=0.9\textwidth]{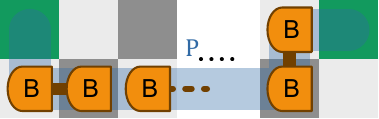}
		\caption{Preparing and consuming a Bell pair for long-range \cnot{} I}\label{fig:nonlocal-cnot-bell-1}
	\end{subfigure}\hfill%
	\begin{subfigure}[t]{0.47\textwidth}
		\centering
		\includegraphics[width=0.9\textwidth]{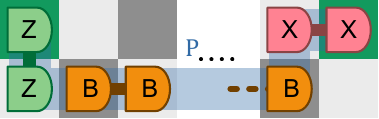}
		\caption{Preparing and consuming a Bell pair for long-range \cnot{} II}\label{fig:nonlocal-cnot-bell-2}
	\end{subfigure}
	\caption{%
	A long-range \cnot{} can be implemented in depth $2$ by first preparing a Bell pair. 
	(a) Joining Bell pairs with Bell measurements. This can be iterated to form a long-range Bell pair along any path of ancillas in depth $2$. 
	(b) A Bell pair can be used to apply a \cnot{}.
	(c,d) The first and second steps of a depth-2 circuit that implements a \cnot{} between a pair of patches at the end of a path of ancilla patches by preparing and consuming a Bell pair. 
	}\label{fig:longRangeCnot}
\end{figure}
A circuit component that we make extensive use of in this paper is the long-range \cnot{} using a Bell pair~\cite{Litinski2017}.
This allows us to apply \cnot{}s in depth $2$ between any pair of qubits (provided there is a path of ancilla qubits which connects them).

To understand the construction, we first show in \cref{fig:longer-range-bell} how to prepare a longer-range Bell pair from two Bell pairs.
By iterating this construction one can form a circuit to prepare a long-range Bell pair at the ends of any path of adjacent ancilla patches in depth $2$.
Next, we show in \cref{fig:cnot-via-bell} how to implement a \cnot{} operation between qubits stored in patches neighboring a pair of patches storing a Bell pair.
Putting these together, using a path of ancilla patches between a pair of qubits,
a long-range \cnot{} can be implemented in depth $2$ in a two-step circuit shown in \cref{fig:nonlocal-cnot-bell-1} and \cref{fig:nonlocal-cnot-bell-2} respectively.
This approach can be used to implement the \cnot{} in depth $2$ circuit using any path from the control to the target qubit which starts with a vertical edge and ends with a horizontal edge.
There is also flexibility in the precise arrangement of the Bell pairs and Bell measurements along the path using the circuits in \cref{sec:cnots-via-bell}.

Note that here we have focused on implementing a long-range \cnot{} by constructing and consuming a Bell pair.
However a similar strategy (of first preparing a long-range Bell pair in the patches at the ends of a path of ancillas) can be used to implement other long-range operations such as teleportation.

\section{Parallel long-range \cnot{}s using Bell pairs}\label{sec:edge-disjoint-path-algos}

Here, we generalize the use of Bell pairs from the setting of compiling an individual non-local \cnot{} gate into surface code operations to the setting in which a set of parallel non-local \cnot{} gates are compiled.
In \cref{fig:logical-layout} and the circuit components in \cref{sec:circuit-components}, ancilla qubits are used to perform some operations on data qubits.
To consider the compilation on large sets of qubits,
we must specify the location of data and ancilla qubits:
here we assume a 1 data to 3 ancilla qubit ratio,
as illustrated in \cref{fig:edpSchedule}.

In \cref{sec:edp-background} we discuss some relevant background on sets of vertex-disjoint paths (VDP) and sets of edge-disjoint paths (EDP) in graphs.
Then in \cref{sec:long-range-cnot-bell} we define the \emph{VDP subroutine} and the \emph{EDP subroutine}
that apply parallel \cnot{} gates at the ends of a particular type of VDP or EDP set.
In \cref{sec:EDP-analysis},
we show how to use the EDP subroutine to compile more general \cnot{} circuits
and prove bounds on the performance of this approach.

\subsection{Vertex-disjoint paths (VDP) and edge-disjoint paths (EDP)}\label{sec:edp-background}
In \cref{sec:longRangeCnot} we saw that a long-range \cnot{} could be implemented with the use of a Bell pair produced with a path of ancilla qubits connecting the control and target of the \cnot{}.
A barrier to implement multiple \cnot{}s simultaneously can arise when an ancilla resides in the paths associated with multiple different \cnot{}s. 
This motivates us to review some relevant theoretical background concerning sets of paths on graphs.

Given a graph $G$, a set of paths $\mathcal P$ is said to be a \textit{vertex-disjoint-path} (VDP) set if no pair of paths in $\mathcal P$ share a vertex, and an \textit{edge-disjoint-path} (EDP) set if no pair of paths in $\mathcal P$ share an edge.
Note that a set of vertex-disjoint paths is also edge-disjoint.
Further consider a set of \emph{terminal pairs} $\mathcal T = \set{(s_1, t_1), \dots, (s_k, t_k)}$
for terminals $s_i, t_i \in V(G)$, the vertices of $G$, and $i \in [k]$.
We then say that a set of paths $\mathcal P$ is a \emph{VDP set for} $\mathcal T$ (respectively an \emph{EDP set for} $\mathcal T$) if $\mathcal P$ is a VDP set (respectively an EDP set), and each path in $\mathcal P$ connects a distinct pair in $\mathcal T$.
These path sets do not necessarily connect all pairs in $\mathcal T$.
In what follows, we pay special attention to the square grid graph (see \cref{fig:graphs-grid}).
The grid graph is relevant for qubits in the surface code as shown in \cref{fig:logical-layout},
where the vertices correspond to code patches and edge connect vertices associated with adjacent patchs\footnote{%
	Later we will consider a modification of the square grid graph because our algorithms require
	some further restrictions on the paths, for example preventing them from passing through those vertices associated with data qubits. 
	It is unclear if all of the results in this section also apply for these modified graphs.
}.

The problems of finding a maximum (cardinality) VDP set for $\mathcal T$ or a maximum EDP set for $\mathcal T$ have been well-studied and there are known efficient algorithms capable of finding approximate solutions to each.
Unfortunately, on grids it is particularly hard to approximate the maximum VDP set.
In particular, for $N \coloneqq \abs{V(G)}$ there exist terminal sets for which no efficient algorithm can find an approximate solution
to within a $2^{\bigo{\log^{1-\epsilon} N}}$ factor of the maximum set size for any $\epsilon > 0$,
unless $\mathrm{NP} \subseteq \mathrm{RTIME}(N^{\poly \log N})$~\cite{Chuzhoy2018}.
However, efficient algorithms are available if one is willing to accept a looser approximation to the optimal solution.
For example, a simple greedy algorithm is an $\bigo{\sqrt N}$-approximation algorithm for finding the maximum VDP set~\cite{Kolliopoulos2004,Kleinberg2006:Approximation},
i.e., it produces a VDP set to within an $\bigo{\sqrt N}$ multiplicative factor of the optimal solution for any graph, not just the grid.
For grids, the best efficient algorithm that is known is an $\bigtildeo{N^{1/4}}$-approximation algorithm~\cite{Chuzhoy2015},
where $\bigtildeo{\cdot}$ hides logarithmic factors of $\bigo{\cdot}$.

The situation is better for approximation algorithms of the maximum EDP set:
There is a $\Theta(\sqrt N)$-approximation algorithm~\cite{Chekuri2006} for any graph,
and on grids \textcite{Aumann1995} showed an $\bigo{\log N}$-approximation algorithm that
was later improved to an $\bigo{1}$-approximation algorithm~\cite{Kleinberg1995,Kleinberg1996}.
In practice, these algorithms can be technical to implement and can have large constant prefactors in their solutions
that can be prohibitive for the instance sizes that we consider.
A simple greedy algorithm forms a $\bigo{\sqrt{N}}$-approximation algorithm~\cite{Kolliopoulos2004}
for finding a maximum EDP set on the two-dimensional grid
and does not suffer from the constant prefactors of the asymptotically superior alternatives.
The dominant runtime complexity of this greedy algorithm is mainly in finding shortest paths for each terminal pair,
giving a $\bigo{\abs{\mathcal T} N \log N}$,
runtime upper bound by Dijkstra's algorithm\footnote{%
It may be possible to improve the runtime by using a decremental dynamic all-pair shortest path algorithm;
it may be quicker to maintain a data structure for all shortest paths that can quickly be updated when edges are removed.
}.

It is informative to consider the comparative size of the maximum EDP and VDP sets for the same terminal set $\mathcal T$.
Since any VDP set is also an EDP set, the size of the maximum VDP set for $\mathcal T$ cannot be larger than the maximum EDP set for $\mathcal T$.
Moreover, one can construct some cases of $\mathcal{T}$ on the grid~\cite{Kleinberg1996} in which the maximum EDP set is a factor $\sqrt N$ larger than the maximum VDP set~\cite{Kleinberg1996}.
For example, consider the set of terminal pairs $\mathcal T = \set{((i,1), (L,i)) \mid i \in [L]}$ of an $L \times L$ grid graph,
where vertex $(i,j)$ denotes the vertex in row $i$ and column $j$.
All terminals can be connected by edge-disjoint paths but the maximum VDP set is of size one.

In \cref{sec:long-range-cnot-bell}, we show that both VDP and EDP sets for $\mathcal T$
can be used to form constant-depth compilation subroutines for disjoint \cnot{} circuits.
Ultimately, as will become clear in \cref{sec:long-range-cnot-bell}, each path in the EDP or VDP sets for $\mathcal T$ allows
us to implement one more \cnot{} gate in parallel by a compilation subroutine.
In this work, we focus on EDPs rather than VDPs for two main reasons. 
Firstly, as mentioned above, better approximation algorithms exist for finding maximum EDP sets
than for finding maximum VDP sets on the grid.
Although, in practice, we make use of the greedy $\bigo{\sqrt N}$-approximation algorithm for finding maximum EDP sets in this work.
Secondly, as was also mentioned above, the maximum EDP set is at least as large as the maximum VDP set.

An important open problem that could ultimately influence the performance of the surface code compilation algorithm we present in this work
is whether an alternative approximation algorithm for finding maximum EDP sets can be used that performs better in practical instances. 

\subsection{Long-range \cnot{} subroutines using VDP and EDP}\label{sec:long-range-cnot-bell}
Here we present one of our main technical contributions, namely a description of how 
to implement a set of long-range \cnot{}s at the end of VDP and EDP sets using surface code operations.
This is central to our overall surface code compilation algorithm presented in \cref{sec:EDPCalogrithm}.

Consider the $L \times L$ square grid graph $G$ (see \cref{fig:graphs-grid}),
which consists of vertices $V(G) = [L] \times [L]$, for $[L] \coloneqq \set{1,\dots, L}$
and undirected edges
\begin{multline}
		E(G) = \set{((i,j),(i,j+1)) \mid i \in [L], j\in [L-1]} \\
		\cup \set{((i,j),(i+1,j)) \mid i\in[L-1], j\in[L]}.
\end{multline}
Here, vertices correspond to qubits stored in surface code patches, and edges connect qubits on adjacent patches (see \cref{fig:logical-layout}).
We color the vertices of $G$ with three colors: black, grey, and white (see \cref{fig:edpSchedule}).
All vertices with both even row and even column index are colored black and correspond to data qubits (where data qubits correspond to qubits in the input circuit).
The vertices (corresponding to ancilla qubits) with both odd row and odd column index are colored white,
and all remaining vertices are colored grey.
This gives us a $1:3$ data qubit to ancilla qubit ratio.
We set $n$ to equal the number of black vertices,
i.e., the number of data qubits.

Due to the designation of some vertices as data qubits and others as ancilla vertices in our layout, and due to the asymmetry of two-qubit operations along horizontal and vertical edges in \cref{fig:logical-layout}, we add some restrictions to the paths we consider.
We define an \emph{operator path} to be a path $P = v_1 v_2 \dots v_k$,
for $k \in \mathbb N$,
such that $v_1$ and $v_k$ correspond to data qubits
and its \emph{interior} $v_2 \dots v_{k-1}$ are all ancilla qubits.
Moreover, $v_1$ to $v_2$ must be a vertical edge,
and $v_{k-1}$ to $v_k$ must be a horizontal edge.
Then an \emph{operator VDP (resp.\ EDP) set} is a set of vertex-disjoint (resp.\ edge-disjoint) operator paths.
In addition, we require that the ends of the paths in the operator EDP set do not overlap.
With the coloring assignments of the grid graph $G$,
it is easy to see that the first and last vertex of an operator path are colored black.
In what follows, we show how we can implement \cnot{}s between the data qubits
at the ends of the paths in an operator VDP (EDP) set in constant depth.

First consider an operator VDP set $\mathcal{P}$.
It is straightforward to see that we can simultaneously apply long-range \cnot{}s along each $P \in \mathcal{P}$ as in \cref{fig:longRangeCnot} in depth 2.
We call this the \emph{vertex-disjoint paths subroutine} (VDP subroutine).

\begin{figure}[tb]
    \centering
    \begin{subfigure}[b]{0.48\textwidth}
        \centering
        \includegraphics[width=0.9\textwidth]{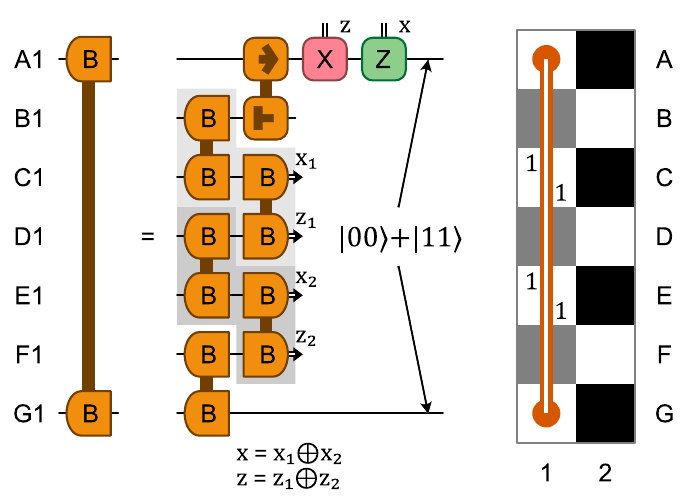}
        \caption{Long-range Bell pair preparation}
        \label{fig:long-range-bell-preparation}
    \end{subfigure}\hfill%
    \begin{subfigure}[b]{0.48\textwidth}
        \centering
        \includegraphics[width=0.8\textwidth]{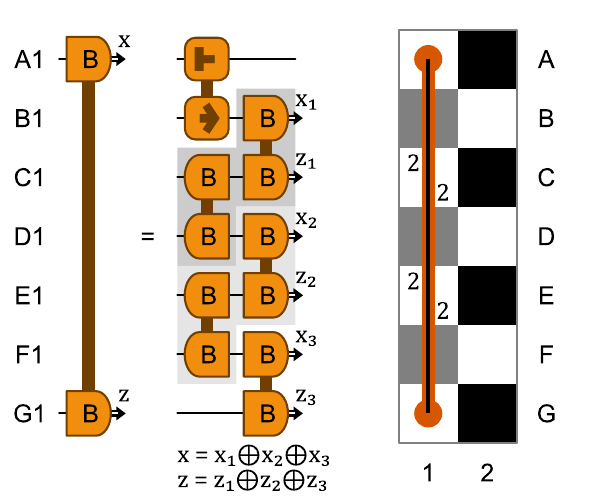}
        \caption{Long-range Bell measurement}
        \label{fig:long-range-bell-measurement}
    \end{subfigure}
    \begin{subfigure}[b]{\textwidth}
        \centering
        \includegraphics[width=0.4\textwidth]{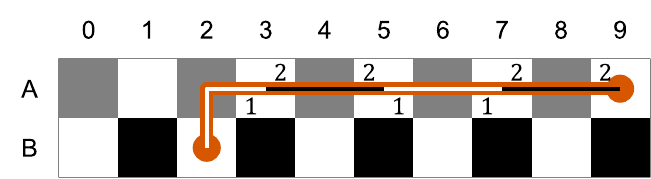}\\
        \includegraphics[width=0.4\textwidth]{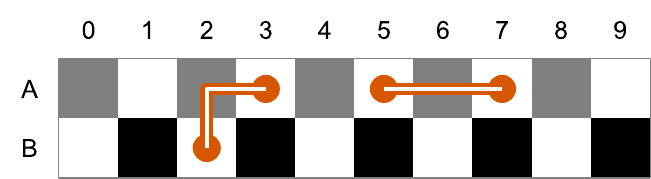}\hfill%
        \includegraphics[width=0.4\textwidth]{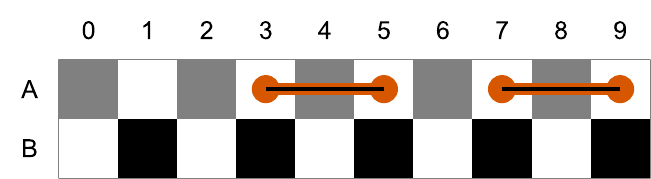}
        \caption{Two-stage Bell preparation using segments}
        \label{fig:long-range-bell-measurement-stages}
    \end{subfigure}    
    \caption{%
    	(\subref{fig:long-range-bell-preparation}) For segments marked in white, we use long-range Bell pair preparation in depth $2$.
    	(\subref{fig:long-range-bell-measurement}) For segments marked in black, we then use long-range Bell pair measurement in depth $2$.
    	(\subref{fig:long-range-bell-measurement-stages}) 	The Bell measurements in stage 2 stitch together the Bell pairs made in phase 1, resulting in a Bell pair in the qubits at the ends of the full path.
	}\label{fig:long-range-bell}
\end{figure}


Now consider an operator EDP set $\mathcal{P}$.
An EDP set can have intersecting paths,
and the ancilla qubits at intersections appear in multiple paths, preventing us from simultaneously producing Bell pairs at their ends.
We circumvent this by producing Bell pairs across a path in two stages by splitting the path into segments; see \cref{fig:long-range-bell}.
We will show that $\mathcal P$ can be \emph{fragmented} into two VDP sets $\mathcal P_1$ and $\mathcal P_2$ that, together, form $\mathcal P$. 
More precisely, each path $P \in \mathcal{P}$ can be built by composing paths contained in $\mathcal{P}_1$ and $\mathcal{P}_2$ such that each path in either $\mathcal{P}_1$ or $\mathcal{P}_2$ appears in precisely one path in $\mathcal{P}$.
We say that the paths in $\mathcal{P}_1$ and $\mathcal{P}_2$ are \emph{segments} of paths in $\mathcal{P}$.
This forms the basis of the \emph{edge-disjoint paths subroutine} (EDP subroutine), which is presented in \cref{alg:edp-protocol}
and illustrated with an example in \cref{fig:edpSchedule}.


\begin{figure}
        \centering
        \begin{subfigure}[b]{0.475\textwidth}
            \centering
            \includegraphics[width=\textwidth]{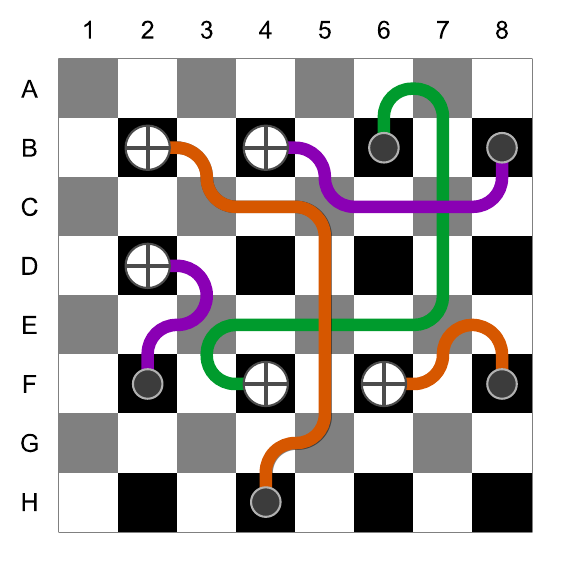}
            \caption{\cnot{} gates and edge-disjoint paths.}
            \label{fig:edge-disjoint-cnots}
        \end{subfigure}%
        \hfill%
        \begin{subfigure}[b]{0.475\textwidth}
            \centering
            \includegraphics[width=\textwidth]{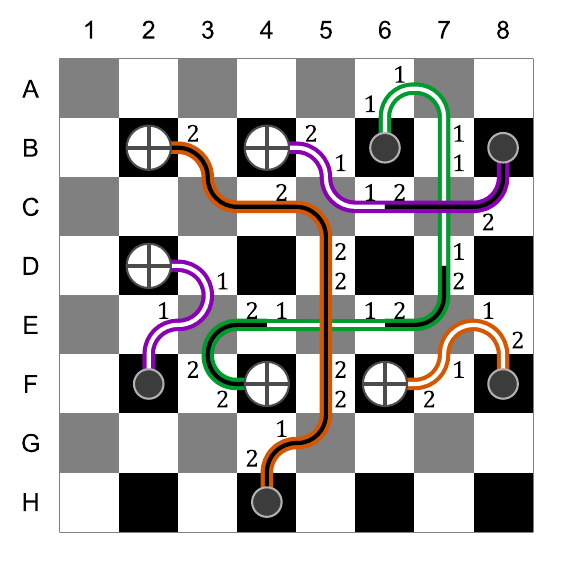}
            \caption{Execution stages assignment}
            \label{fig:edge-disjoint-cnots-assignment}
        \end{subfigure}
        \vskip\baselineskip
        \begin{subfigure}[b]{0.475\textwidth}
            \centering
            \includegraphics[width=\textwidth]{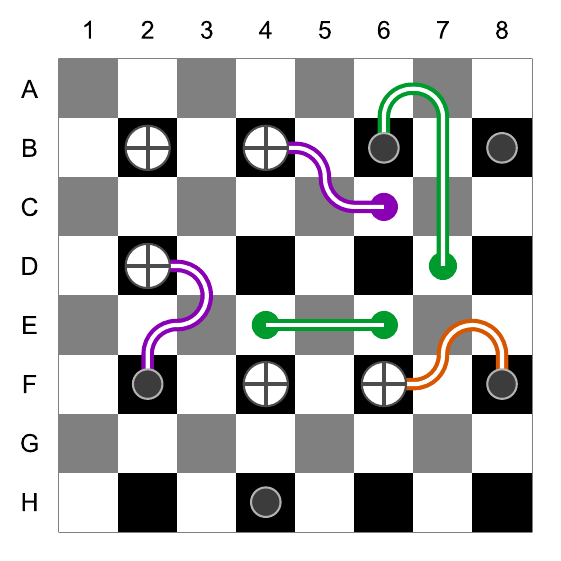}
            \caption{First stage. State preparation.}
            \label{fig:edge-disjoint-cnots-stage-1}
        \end{subfigure}%
        \hfill
        \begin{subfigure}[b]{0.475\textwidth}
            \centering
            \includegraphics[width=\textwidth]{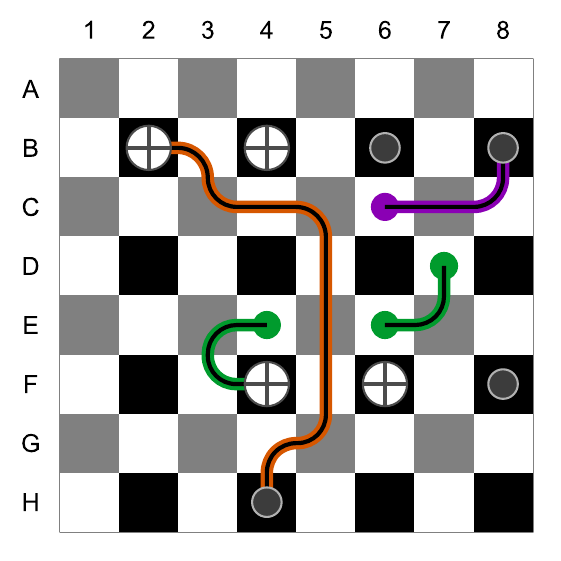}
            \caption{Second stage. Measurements.}
            \label{fig:edge-disjoint-cnots-stage-2}
        \end{subfigure}
    	\caption{%
    		The EDP subroutine implements a set of parallel \cnot{}s connected by an operator EDP set.
    		We assume a qubit ratio of 1 to 3 of data (black) to ancilla (gray and white).
    		(\subref{fig:edge-disjoint-cnots}) The input to the EDP subroutine is a set of \cnot{}s and an associated EDP set.
    		(\subref{fig:edge-disjoint-cnots-assignment}) We fragment the EDP set into two VDP sets consisting of segments of the original paths, and implement the compiled circuit over two depth-2 stages, one for each of these sets.
    		(\subref{fig:edge-disjoint-cnots-stage-1}) During the first stage we prepare a Bell pair between the ends of the segments in the first VDP set.
    		(\subref{fig:edge-disjoint-cnots-stage-2}) During the second stage we perform joint Bell measurements between the ends of segments in the second VDP set,
    		producing long-range Bell pairs on ancillas adjacent to the control and target of each \cnot{}.
        	Then, long-range \cnot{}s can easily be applied by using the long-range Bell pairs (\cref{sec:longRangeCnot}).
        	See \cref{fig:long-range-bell,fig:long-range-cnot} for further details of the long-range operations used here.
    	}\label{fig:edpSchedule}
\end{figure}

\begin{algorithm}[hbt]
	\caption{%
		\emph{EDP subroutine}: to apply \cnot{}s to the data qubits at the endpoints of a set of edge-disjoint paths $\mathcal P$,
		where the interior of each path is supported on ancilla qubits.
		The depth is at most 4.
	}\label{alg:edp-protocol}
	\Input{An operator EDP set $\mathcal P$}
	$\mathcal P_1, \mathcal P_2 \gets$ fragment $\mathcal P$ in two VDP sets of segments\tcp*{\cref{thm:edp-to-phases}}
	\For{segment $P \in \mathcal P_1$}{%
		\eIf{$P$ connects two data qubits}{%
			\execute long-range \cnot{} along $P$\;
		}{%
		\execute phase 1 operation along $P$ (\cref{fig:long-range-bell-preparation}, or \ref{fig:long-range-zz-with-teleport-phase-1}, or \ref{fig:long-range-xx-with-teleport-phase-1})\;
		}
	}
	\For{segment $P \in \mathcal P_2$}{%
		\eIf{$P$ connects two data qubits}{%
			\execute long-range \cnot{} along $P$\;
		}{%
		\execute phase 2 operation along $P$ (\cref{fig:long-range-bell-measurement}, or \ref{fig:long-range-zz-with-teleport-phase-2}, or \ref{fig:long-range-xx-with-teleport-phase-2})\;
		}
	}
\end{algorithm}

\begin{figure}
	\centering
    \begin{subfigure}[b]{\textwidth}
        \centering
        \includegraphics[scale=0.6]{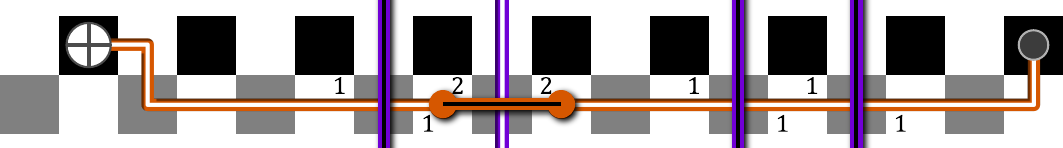}
        \caption{Long-range \cnot{} in two stages}
        \label{fig:long-range-cnot-break-down}
    \end{subfigure}
	\begin{subfigure}[b]{0.475\textwidth}
        \centering
        \includegraphics[width=\textwidth]{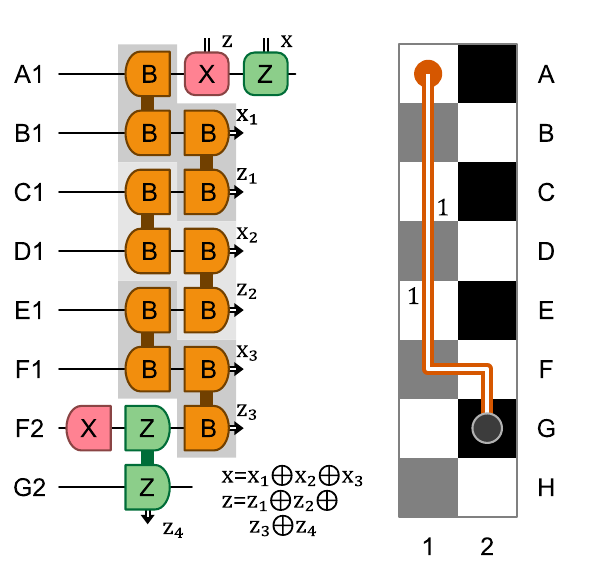}
        \caption{Long-range X prep.\ with ZZ meas.}
        \label{fig:long-range-zz-with-teleport-phase-1}
    \end{subfigure}%
    \hfill%
    \begin{subfigure}[b]{0.475\textwidth}
        \centering
        \includegraphics[width=\textwidth]{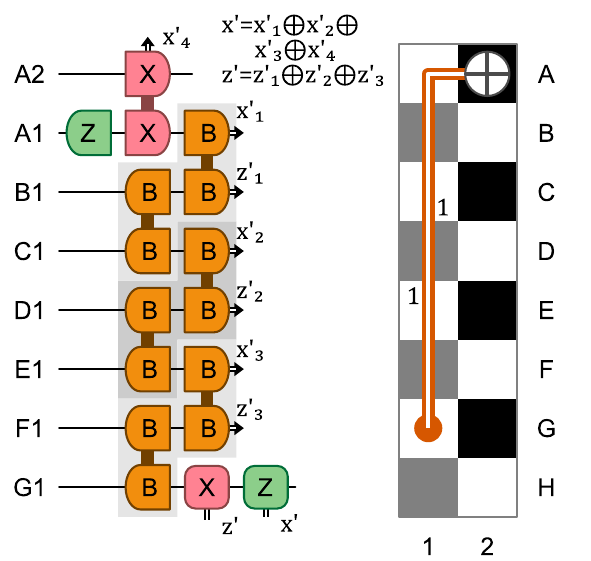}
        \caption{Long-range Z prep.\ with XX meas.}
        \label{fig:long-range-xx-with-teleport-phase-1}
    \end{subfigure}
	\begin{subfigure}[b]{0.475\textwidth}
        \centering
        \includegraphics[width=\textwidth]{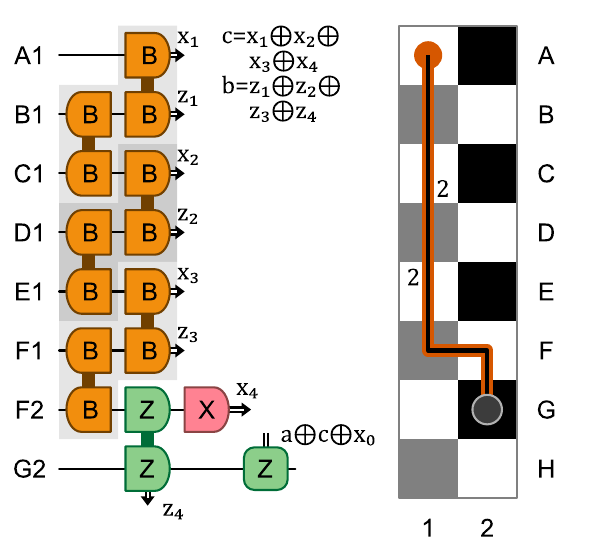}
        \caption{Long-range teleport with ZZ meas.}
        \label{fig:long-range-zz-with-teleport-phase-2}
    \end{subfigure}%
    \hfill%
    \begin{subfigure}[b]{0.475\textwidth}
        \centering
        \includegraphics[width=\textwidth]{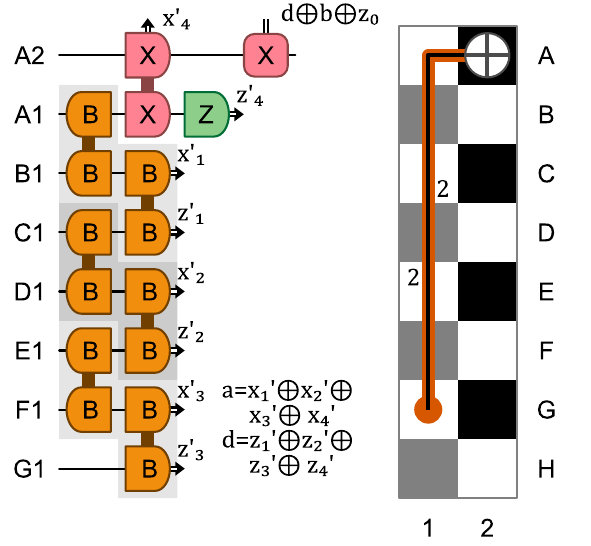}
        \caption{Long-range teleport with XX meas.}
        \label{fig:long-range-xx-with-teleport-phase-2}
    \end{subfigure}
    \caption{%
    	Detailed implementation of the steps in \cref{fig:edpSchedule}.
        For each segment that is scheduled in phase 1,
    	we use (\subref{fig:long-range-zz-with-teleport-phase-1}) and (\subref{fig:long-range-xx-with-teleport-phase-1});
        and for each supbath that is scheduled in phase 2,
    	we use (\subref{fig:long-range-zz-with-teleport-phase-2}) and (\subref{fig:long-range-xx-with-teleport-phase-2}).
        In (\subref{fig:long-range-zz-with-teleport-phase-2}) variables $x_0$ and $z_0$ equal to the total
        parity of all long-range Bell measurements applied during stage 2 on the \cnot{} path.
        Each of these operations takes depth $2$.
        (\subref{fig:long-range-zz-with-teleport-phase-2}) and (\subref{fig:long-range-xx-with-teleport-phase-2}) share the variables $a$ and $c$.
    }\label{fig:long-range-cnot}
\end{figure}

We show the following Lemma,
which restricts the adjacency of \emph{crossing} vertices.
As will become clear later,
the adjacent crossing vertices impose systems of constraints on fragmenting $\mathcal P$,
and their restricted adjacency of any operator EDP set ensures a fragmentation into two VDP sets always exists.

\begin{lemma}\label{lem:crossing-adjacency}
	Given an operator EDP set $\mathcal P$,
	a \emph{crossing} vertex is a vertex contained in more than one path in $\mathcal P$.
	Let the set of crossing vertices be $V_c$,
	then the induced subgraph $G[V_c]$ contains only three kinds of connected components:
	\begin{enumerate}
		\item Isolated vertices.
		\item A \emph{horizontal path},
			where each vertex $(i,j)$ in the connected component can only be adjacent to $(i-1,j)$ and $(i+1,j)$.
		\item A \emph{vertical path},
			where each vertex $(i,j)$ in the connected component can only be adjacent to $(i,j-1)$ and $(i,j+1)$.
	\end{enumerate}
\end{lemma}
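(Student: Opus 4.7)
The plan is to exploit the coloring together with the operator-path requirement that the first edge is vertical and the last edge horizontal. First I would observe that since every operator path has an ancilla-only interior and its data-qubit endpoints do not overlap with other paths, every crossing vertex is an ancilla (grey or white). In $G$, whites and greys always alternate along an edge, so each edge of $G[V_c]$ joins a white to a grey; it is horizontal when the grey has type (odd, even) and vertical when it has type (even, odd).

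Next I would analyze crossings at grey vertices. At a grey (odd, even) vertex the vertical neighbors are black and the horizontal ones white, so for two edge-disjoint operator paths to use all four incident edges neither can pass straight through vertically: that would force a black-grey-black path whose last edge is vertical, violating the operator-path convention. Hence both paths must turn at the grey, each using one horizontal and one vertical edge. Since a vertical edge meeting a black can only serve as the first edge of a path (not the last), the two vertical black neighbors of the grey must be \emph{starts} of two distinct operator paths. A symmetric argument at a grey (even, odd) crossing shows that both paths turn and that the two horizontal black neighbors are \emph{ends} of two distinct operator paths.

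Finally I would rule out bends and branches. A grey (odd, even) has no ancilla neighbors in the vertical direction and a grey (even, odd) has none in the horizontal direction, so any direction change within a component of $G[V_c]$ has to occur at a white vertex $v$ that has both a horizontally adjacent crossing $u$ of type (odd, even) and a vertically adjacent crossing $w$ of type (even, odd). For each of the four positions of $u, w$ around $v$, I would then check that the two (even, even) vertices forced to be starts by the analysis of $u$ and the two forced to be ends by the analysis of $w$ share a common black vertex; that vertex would be both a start of one operator path and an end of another, contradicting the non-overlap condition on operator-EDP endpoints. Branching reduces to bending because three in-component edges at one vertex always contain two that are perpendicular. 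The only remaining possibilities are isolated vertices, components with only horizontal in-component edges (whites alternating with (odd, even) greys), and components with only vertical in-component edges (whites alternating with (even, odd) greys), matching the three cases in the lemma.

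The hard part will be the case analysis for the bend: one must keep careful track of which blacks are forced to be starts versus ends around the two adjacent grey crossings, and notice that it is the non-overlap condition on path endpoints---not merely edge-disjointness---that ultimately supplies the contradiction that rules out mixed-direction components.
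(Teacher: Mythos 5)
Your proof is correct and follows essentially the same route as the paper's: reduce to the case of an $(\text{odd},\text{odd})$ ancilla with two perpendicular crossing neighbours, observe that a crossing ancilla of the mixed-parity type must use all four of its incident edges, and derive a contradiction from the shared black diagonal neighbour being forced to serve as the terminal of two distinct path-ends, which violates the non-overlapping-endpoints condition on operator EDP sets. Your version is somewhat more detailed (the start/end classification via the vertical-first/horizontal-last convention is not strictly needed for the contradiction), but the key idea and the decisive use of the endpoint non-overlap condition are the same.
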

\begin{proof}
	We consider all possible colors of a vertex $(i,j)$ in a connected component of $G[V_c]$.
	Black vertices cannot be crossing vertices by definition of an operator EDP set
	so cannot be contained in $V_c$.
	It is then easy to see that white vertices in $V_c$ satisfy the Lemma.

	Therefore, the only relevant case is when $(i,j)$ is a grey vertex.
	The vertices $(i+1,j)$ and $(i,j+1)$ are white and $(i+1,j+1)$ is black.
	We show that these white vertices cannot both be crossing vertices.
	Suppose that they are,
	then both edges between the white vertices and the black vertex,
	$((i+1,j),(i+1,j+1))$ and $((i,j+1),(i+1,j+1))$,
	are in $\mathcal P$.
	This is a contradiction with the fact that
	the interior of operator EDP paths cannot contain a black vertex
	so it must be at the end of two paths,
	but an operator EDP set cannot contain two paths ending at the same vertex.
	By the same argument applied to the other white neighbors of $(i,j)$ we see that only $(i-1,j)$ and $(i+1,j)$ or $(i,j-1)$ and $(i,j+1)$ can both be crossing vertices,
	and the claim follows.
\end{proof}

We now prove that $\mathcal P$ can be fragmented.

\begin{theorem}\label{thm:edp-to-phases}
	We can fragment an operator EDP set $\mathcal P$
	to produce vertex-disjoint sets of segments $\mathcal P_1$ and $\mathcal P_2$.
	If $\mathcal P$ is vertex-disjoint, then $\mathcal P_1 = \mathcal P$ and $\mathcal P_2 = \emptyset$.
\end{theorem}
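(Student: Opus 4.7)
The plan is to produce the fragmentation by assigning every edge of every path $P \in \mathcal P$ a phase in $\{1,2\}$, and then defining the segments to be the maximal monochromatic subpaths of each $P$. Adjacent segments inside a single path then share their common endpoint but carry opposite phases, and each segment is placed in $\mathcal P_1$ or $\mathcal P_2$ according to its phase.

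I first translate the desired vertex-disjointness of $\mathcal P_1$ and $\mathcal P_2$ into local constraints at the crossing vertices. A non-crossing vertex $v$ on a path $P$ is contained only in segments of $P$, so it automatically satisfies the disjointness requirement regardless of whether $P$ is cut at $v$. At a crossing vertex $v$ lying on both $P$ and $P'$, however, if the two edges of $P$ incident to $v$ receive different phases then $v$ lies in both a phase-$1$ and a phase-$2$ segment of $P$, and combining with the at least one segment that $P'$ contributes through $v$ forces some phase to contain two distinct segments meeting at $v$. Hence a valid edge colouring must satisfy (a) at each crossing $v$, both edges of each incident path share a common value $\phi_P(v) \in \{1,2\}$, and (b) for the two paths $P, P'$ through $v$, $\phi_P(v) \neq \phi_{P'}(v)$.

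Next I exhibit $\phi$ by processing the connected components of $G[V_c]$ independently, using \cref{lem:crossing-adjacency}. For an isolated crossing $v$ the two paths through $v$ are coloured arbitrarily by 1 and 2. For a horizontal (resp.\ vertical) path component $C = v_1 v_2 \cdots v_k$, a single path $P \in \mathcal P$ may traverse a contiguous portion of $C$ using intra-$C$ edges, and then (a) forces $\phi_P$ to be constant along that run. Using that exactly two paths meet at each $v_i$ and together cover all four incident grid edges (each path contributing two), I case-analyse the possible path configurations at each $v_i$ and produce an explicit bipartition of the paths incident to $C$ into two classes whose members pairwise avoid sharing crossings in $C$; assigning phase $1$ to one class and phase $2$ to the other then satisfies (a) and (b) throughout $C$. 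The structural restriction of \cref{lem:crossing-adjacency}, which forbids the crossing component from branching, is what makes this bipartition possible.

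Once $\phi_P(v)$ is defined at every crossing on each $P$, I extend it to all edges of $P$ by keeping the phase constant along each maximal $P$-subpath between consecutive crossings whose $\phi_P$ values agree and switching phase at one intermediate non-crossing vertex when they disagree; such an intermediate vertex always exists because two consecutive crossings on $P$ that are grid-adjacent lie in the same component of $G[V_c]$ and therefore carry equal $\phi_P$ values by the previous paragraph. The vertex-disjoint case is immediate, since $V_c = \emptyset$ imposes no constraints and setting $\phi_P \equiv 1$ gives $\mathcal P_1 = \mathcal P$ and $\mathcal P_2 = \emptyset$. The main technical obstacle is the component-level case analysis for horizontal and vertical components, where the interplay of (a)'s forced equalities and (b)'s forced inequalities could in principle conflict; \cref{lem:crossing-adjacency}'s ``no branching'' property is exactly what rules this out and allows the bipartition to be constructed.
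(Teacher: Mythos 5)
Your proof is correct and takes essentially the same route as the paper's: a $\{1,2\}$ edge-labelling constrained at each crossing vertex so that the two edges of each path through it agree and the two paths disagree, with feasibility established component-by-component on $G[V_c]$ via \cref{lem:crossing-adjacency}. The paper is terser on why each component's constraint system is consistent (it simply notes one degree of freedom per connected component), whereas you spell out the bipartition along a component and the phase switch at a non-crossing intermediate vertex, but the underlying argument is the same.
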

\begin{proof}
	We assign edges for inclusion in segments in $\mathcal P_1$ or $\mathcal P_2$
	by an edge labelling $l(e) \colon  E(G) \to \set{1,2}$.
	Given a labelling of all edges $e$ in the paths of $\mathcal P$,
	we can assign edges $l(e)=b$ to segments in $\mathcal P_b$.
	Therefore, given a labelling of all edges in paths in $\mathcal P$,
	it is easy to construct $\mathcal P_1$ and $\mathcal P_2$.
	We now label all edge in the paths in $\mathcal P$
	and prove that their labelling
	guarantees the vertex-disjointness property of $\mathcal P_1$ and $\mathcal P_2$.

	We constrain the labeling around every crossing vertex $v$ so that the VDP property is satisfied.
	Clearly, $v$ is contained in the interior of exactly two paths, $P_1$ and $P_2$.
	Let $v$ be contained in edges $e_1$ and $e_1'$ of $P_1$,
	and edges $e_2$ and $e_2'$ of $P_2$,
	then we impose the constraints
	\begin{align}
		l(e_1) &= l(e_1')\\
		l(e_2) &= l(e_2')\\
		l(e_1) &\neq l(e_2)
	\end{align}
	guaranteeing the vertex-disjointness of segments at $v$
	since a segment of $P_1$ must span both $e_1$ and $e_1'$,
	and a segment of $P_2$ must span both $e_2$ and $e_2'$
	with a different label.

	We show there always exists a feasible solution given these constraints.
	If we consider the graph $G[V_c]$ induced by crossing vertices $V_c$,
	then we see that every connected component in $G[V_c]$ gives a system of constraints.
	The adjacency of $G[V_c]$, by \cref{lem:crossing-adjacency},
	is such that each system has one degree of freedom,
	which we decide arbitrarily.

	Finally, for every vertex-disjoint path $P \in \mathcal P$,
	assign $l(e) = 1$ to all edges $e$ in $P$.
	All remaining edges can be labeled arbitrarily.
\end{proof}

The depth of a \cnot{} circuit produced by the EDP subroutine for an operator EDP set $\mathcal P$ is at most 4.
If $\mathcal P$ happens to be vertex-disjoint,
then the depth is 2 since all paths are assigned to phase 1 by \cref{thm:edp-to-phases}.


\subsection{Compiling parallel \cnot{} circuits with the EDP subroutine}\label{sec:EDP-analysis}
In this section we consider how to compile input parallel \cnot{} circuits using the EDP subroutine.
We define the terminal pairs $\mathcal T \subseteq V(G) \times V(G)$ to be the pairs of control and target qubits for each \cnot{} gate in the parallel \cnot{} circuit.
To use the EDP subroutine,
we need to find operator EDP sets $\mathcal P_1, \dots, \mathcal P_k$
that connect all terminal pairs in $\mathcal T$.
We will refer to any such set $\{ \mathcal P_1, \dots, \mathcal P_k \}$ as a \emph{$\mathcal T$-operator set}.
The depth of the compiled implementation is minimized when the size $k$ of the $\mathcal T$-operator set is minimized.

There are reasons to believe that the compilation strategy for parallel \cnot{}-circuits formed by finding a minimal $\mathcal T$-operator set and applying the EDP subroutine should produce low-depth output circuits.
For sparse input circuits, i.e. those with a small number of \cnot{}s,
one can expect a small $\mathcal T$-operator set to exist, giving a low depth output. 
On the other hand, we now prove that there are dense \cnot{} circuits for which the EDP subroutine with a minimal size $\mathcal T$-operator set produces a compiled circuit with optimal depth (up to a constant multiplicative factor).

\begin{theorem}\label{thm:parallel-cnot-not-slow}
	Let a parallel input \cnot{} circuit with corresponding terminal pairs $\mathcal T$ be given,
	and let the $n$ qubits of the input circuit be embedded in a grid among $3n$ ancilla qubits according to the layout in \cref{fig:edpSchedule}.
	For simplicity, we assume $n$ is both even and the square of an integer.
	We can find a $\mathcal T$-operator set of size at most $2\sqrt n - 1$ in polynomial time.
\end{theorem}
\begin{proof}
	For each \cnot{} we construct an operator path
	and argue that all such paths can be grouped into $\bigo{\sqrt n}$ disjoint EDP sets.
	For simplicity, in the following,
	we specify paths by a sequence of key vertices,
	with each consecutive pair of key vertices connected by the shortest path (which is a horizontal or a vertical line).

	We now construct an operator path for each \cnot{},
	where the associated control vertex is $v = (v_x, v_y) \in V(G)$
	and the target vertex is $u = (u_x, u_y) \in V(G)$.
	We can always form an operator path to connect $u$ and $v$ given by the following sequence of five key vertices
	$v$, $(v_x, v_y-1)$, $(u_x-1, v_y-1)$, $(u_x-1, u_y)$, $u$.
	This path consists of one vertical end segment,
	one horizontal interior segment,
	one vertical interior segment,
	and finally a horizontal end segment.

	Having assigned a path to each \cnot{},
	we now show that any of these operator paths can share an edge with at most $2(\sqrt n -1)$ of the other paths.
	Since the operator paths have distinct endpoints,
	two different paths cannot share an edge on either of their end segments
	$v, (v_x, v_y-1)$ and on $(u_x-1, u_y), u$.
	Therefore pairs of these operator paths can only share an edge on their interior segments.
	The horizontal interior segment of the operator path from $v$ to $u$ can share an edge with at most $\sqrt n - 1$ other paths.
	To see this, consider an operator path from $v' = (v'_x, v'_y) \in V(G)$ to $u' = (u'_x, u'_y) \in V(G)$ 
	that shares at least one horizontal edge with the operator path from $v$ to $u$.
	Explicitly, that means the segment $(v_x, v_y-1)$, $(u_x-1, v_y-1)$ shares an edge with
	the segment $(v'_x, v'_y-1)$, $(u'_x-1, v'_y-1)$,
	which implies that $v_y = v'_y$.
	Since the terminals are unique, there can only be $\sqrt n - 1$ other \cnot{}s 
	with the control sharing the $v_y$ coordinate.
	An analogous argument applies for vertical segments,
	such that the operator path from $u$ to $v$ can share an edge with at most $2(\sqrt n - 1)$ other operator paths.

	Let us construct a graph $H$ where each vertex represents an operator path as constructed above.
	We connect two vertices in $H$ if the associated paths share an edge.
	Every vertex in $H$ has degree at most $2(\sqrt n-1)$,
	therefore, $H$ is $(2\sqrt n-1)$-colorable using the (polynomial time) greedy coloring algorithm.
	We construct a $\mathcal T$-operator set of size $2\sqrt n -1$
	by grouping the paths associated with each color in a set of edge-disjoint paths.
\end{proof}

We now show a general lower bound on compiling parallel \cnot{} circuits to the surface code architecture.
Our strategy will be to consider a parallel \cnot{} circuit
with control data qubits in an area with small boundary
that generates an amount of entanglement across the boundary proportional to the area for a given initial state.
However, each elementary surface code operation is local such that only those operations acting at the boundary can increase the entanglement across it.
The depth of any implementation of the \cnot{} circuit is then lower bounded by the entanglement that it generates
over the boundary size~\cite{Delfosse2021,QuantumRouting}.

\begin{theorem}\label{thm:parallel-cnot-lowerbound}
	Consider a surface code architecture of $n$ data qubits
	embedded in a grid where all ancilla qubits are in the $\ket{0}$ state.
	For any positive integer $k \le n/2$,
	there exists a parallel \cnot{} circuit of $k$ \cnot{} gates with associated terminal pairs $\mathcal T$
	that needs depth $\bigomega{\sqrt k}$ to be implemented on the surface code architecture.
\end{theorem}
\begin{proof}
	Consider a \cnot{} circuit with terminal pairs $\mathcal T$
	with control qubits on data vertices in a square region, $V_L$, and target qubits on vertices outside $V_L$.
	We initialize the $2k$ data qubits associated with $\mathcal T$ to a product state $\ket{+}^k \ket{0}^k$,
	with $\ket{+}$ on control qubits
	and $\ket{0}$ on target qubits
	(the remaining data qubits are initialized in an arbitrary product state and ignored).
	After applying the \cnot{} circuit, we obtain $k$ Bell pairs.
	Therefore, the (von Neumann) entropy of the reduced state of the data qubits in $V_L$
	has increased from $0$ to $k$.

	Consider a circuit $\mathcal C$ of depth $d$ that implements the parallel \cnot{} circuit.
	Any elementary operation of the surface code acting only within $V_L$
	or within $\bar V_L \coloneqq V(G) \setminus V_L$
	or classical communication (together, LOCC)
	cannot increase the entropy of the state on $V_L$.
	Moreover, as we show below, each elementary operation that acts both on $V_L$ and on $\bar V_L$ can increase the entropy by at most a constant $4$.
	We can therefore upper bound the increase in entropy due to $\mathcal C$
	by $4 d$ times the number of vertices adjacent to $V_L$, which is proportional to $\sqrt{k}$.
	To attain the $k$ increase in entropy,
	we therefore need that $d = \bigomega{\sqrt{k}}$.
	
	We now bound the increase in entropy of any elementary operations acting
	on $V_L$ and $\bar V_L$ to at most $4$.
	All such elementary operations are built from a single $XX$ or a $ZZ$ measurements
	and single qubit operations~\cref{sec:surfaceCode},
	which cannot increase the entropy.
	It is possible to implement $XX$ and $ZZ$ measurements
	acting on $V_L$ and $\bar V_L$ using two \cnot{}s and 
	operations acting only within $V_L$ or within $\bar V_L$.
	The increase in entropy in $V_L$ by a \cnot{} operation
	is bounded by $2$~\cite[Lemma~1]{Bennett2003}.
	Therefore, $XX$ measurements, $ZZ$ measurements, and indeed 
	any elementary operation of the surface code can increase the entropy by at most $4$.
\end{proof}

In practice, it can be difficult to find minimal-size $\mathcal T$-operator sets.
However, when the minimal size $\mathcal T$-operator set is $k$,
in the following theorem we show that a $\mathcal T$-operator set $\{\mathcal{P}_1,\dots,\mathcal{P}_l\}$
with size at most $l = \bigo{k \log \abs{\mathcal T}}$ can be found by a greedy algorithm that finds
iteratively finds the maximum operator EDP set for remaining terminals in $\mathcal T$.

\begin{theorem}\label{thm:min-operator-sets}
	On the grid of $n$ vertices,
	the greedy algorithm for finding $\mathcal T$-operator sets
	repeats the following two steps , for $i=1,\dots$,
	until there are no more terminal pairs to connect:
	\begin{enumerate}
		\item find a maximum operator EDP set $\mathcal P_i$,
		\item remove all terminal pairs in $\mathcal P_i$ from $\mathcal T$.
	\end{enumerate}
	The set $\{ \mathcal P_1, \dots, \mathcal P_k \}$ is a $\mathcal T$-operator set
	and	is an $\bigo{\log \abs{\mathcal T}}$-approximation
	algorithm for finding minimum-size $\mathcal T$-operator sets.
\end{theorem}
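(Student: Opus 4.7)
The plan is to reduce this to a standard set-cover style greedy analysis, where the universe is the terminal set $\mathcal T$ and the ``sets'' available to cover it are the operator EDP sets of $G$. The minimum-size $\mathcal T$-operator set is precisely the optimal set cover in this formulation, so the $\bigo{\log \abs{\mathcal T}}$ bound should follow from the classical geometric-decay argument for greedy set cover.

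Concretely, let $k$ denote the size of a minimum $\mathcal T$-operator set, realized by operator EDP sets $\mathcal Q_1, \dots, \mathcal Q_k$ whose endpoints together cover $\mathcal T$. Let $\mathcal T_i$ denote the set of terminal pairs still uncovered before iteration $i$ of the greedy algorithm, with $\mathcal T_1 = \mathcal T$. The first key step is to observe that for every $i$, restricting each $\mathcal Q_j$ to only those paths whose endpoints lie in $\mathcal T_i$ yields an operator EDP set for $\mathcal T_i$: edge-disjointness and the operator-path property are inherited from $\mathcal Q_j$, and no endpoint constraint is violated since we are only removing paths. Together these restrictions still cover all of $\mathcal T_i$, so by pigeonhole at least one has cardinality $\geq \abs{\mathcal T_i}/k$.

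The second key step is to apply the optimality assumption on the greedy choice: since $\mathcal P_i$ is a \emph{maximum} operator EDP set for $\mathcal T_i$, we get $\abs{\mathcal P_i} \geq \abs{\mathcal T_i}/k$, hence
\begin{equation}
    \abs{\mathcal T_{i+1}} \;\leq\; \abs{\mathcal T_i}\left(1 - \tfrac{1}{k}\right) \;\leq\; \abs{\mathcal T_i}\, e^{-1/k}.
\end{equation}
Iterating gives $\abs{\mathcal T_{i+1}} \leq \abs{\mathcal T}\,e^{-i/k}$, so after $i = \lceil k \ln \abs{\mathcal T} \rceil + 1$ iterations we have $\abs{\mathcal T_{i+1}} < 1$, forcing $\mathcal T_{i+1} = \emptyset$ and terminating the algorithm. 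Thus the total number of sets produced is at most $\bigo{k \log \abs{\mathcal T}}$, which is the desired $\bigo{\log \abs{\mathcal T}}$ approximation ratio. Validity of the output as a $\mathcal T$-operator set is immediate from the termination condition, since every terminal pair is removed in the iteration where it is first connected.

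The step most deserving of care is the restriction argument: one must verify that an operator EDP set remains an operator EDP set after pruning paths, and that the resulting restrictions of $\mathcal Q_1, \dots, \mathcal Q_k$ collectively still cover $\mathcal T_i$. Both are straightforward from the definitions, but they are exactly the ingredients that make the greedy bound applicable; there is no nontrivial graph-theoretic obstacle beyond them, and in particular the analysis does not rely on the grid structure or on \cref{thm:edp-to-phases}.
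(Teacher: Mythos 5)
Your proposal is correct and follows essentially the same argument as the paper: pigeonhole on the $k$ sets of an optimal $\mathcal T$-operator set (restricted to the remaining terminals) to show the greedy maximum connects at least a $1/k$ fraction per iteration, then the standard geometric-decay bound gives $\bigo{k \log \abs{\mathcal T}}$ iterations. You spell out the restriction step that the paper leaves implicit, but the substance is identical.
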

\begin{proof}
	We base our proof on~\cite{Aumann1995}.
	Assume that the minimum-size $\mathcal T$-operator set is $\set{\mathcal Q_1, \dots, \mathcal Q_K}$
	for some size $K$.
	Then there is an operator EDP set $\mathcal Q_i$, for $i \in [K]$,
	such that $\abs{\mathcal Q_i} \ge \abs{\mathcal T}/K$.
	Therefore, the number of unconnected terminal pairs
	is reduced by at least a factor $(1-1/K)$ each iteration
	and it will require at most $\bigo{K \log \abs{\mathcal T}}$ iterations to connect all terminal pairs~\cite{Johnson1974}.
\end{proof}

\begin{figure}
	\begin{subfigure}{0.49\textwidth}
		\centering
		\includegraphics[width=0.8\textwidth]{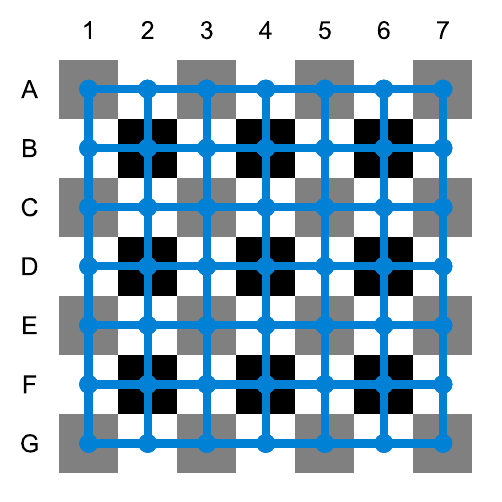}
		\caption{%
			Grid graph
		}\label{fig:graphs-grid}
	\end{subfigure}%
	\hfill%
	\begin{subfigure}{0.49\textwidth}
		\centering
		\includegraphics[width=0.8\textwidth]{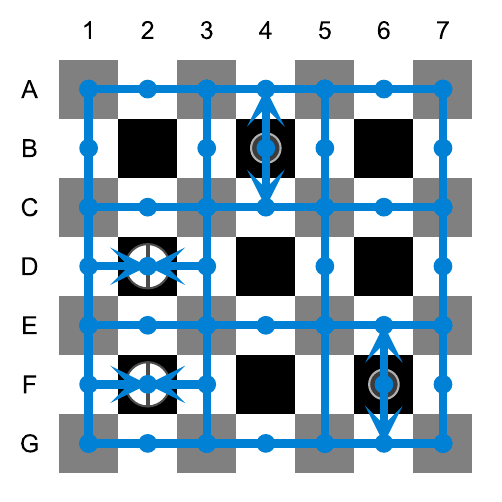}
		\caption{%
			Operator graph
		}\label{fig:graphs-operator}
	\end{subfigure}%
	\caption{%
		The graphs used in this paper.
		(\subref{fig:graphs-grid}) The grid graph where
		each surface code patch corresponds to a vertex and is connected to its neighbors.
		(\subref{fig:graphs-operator}) The operator graph
		for a set of terminal pairs $\mathcal T$ that correspond with a parallel \cnot{} circuit.
		EDP sets for $\mathcal T$ on this graph are also operator EDP sets.
	}\label{fig:graphs}
\end{figure}

To make use of \cref{thm:min-operator-sets} we would ideally like to have an algorithm to find maximum operator EDP sets on the grid,
however the efficient algorithms we discussed in \cref{sec:edp-background} fall short of this in two ways. 
Firstly they find EDP sets rather than operator EDP sets, and secondly they provide approximate maximum sets rather than maximum sets.
Fortunately, we find an equivalence between operator EDP sets on the grid and EDP sets on a graph that we call  \emph{the $\mathcal T$-operator graph} (see \cref{fig:graphs-operator}).
The $\mathcal T$-operator graph is a copy of the grid graph
but with all vertices corresponding to control qubits in $\mathcal T$ only having vertical outgoing edges,
and with all vertices corresponding to target qubits in $\mathcal T$ only having horizontal incoming edges,
and all remaining vertices corresponding to data qubits are removed.
An EDP set for terminal pairs $\mathcal T$ on the $\mathcal T$-operator graph
is an operator EDP set on the grid.
It is easy to see that a maximum operator EDP set for $\mathcal T$ on the grid
is equivalent to a maximum EDP set for $\mathcal T$ on the $\mathcal T$-operator graph.
Using an approximation algorithm for finding the maximum operator EDP set
also still gives approximation guarantees for minimizing the $\mathcal T$-operator set,
as shown in the following Corollary.

\begin{corollary}
	The greedy algorithm for finding minimum $\mathcal T$-operator sets,
	but with a $\kappa$-approximation algorithm for finding maximum operator EDP sets,
	gives an $\bigo{\kappa \log \abs{\mathcal T}}$-approximation algorithm for finding minimum $\mathcal T$-operator sets.
\end{corollary}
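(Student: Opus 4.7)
The plan is to mimic the proof of \cref{thm:min-operator-sets} almost verbatim, merely inserting the approximation factor $\kappa$ into the right place in the pigeonhole step. Let $K$ denote the size of a minimum $\mathcal{T}$-operator set, $\{\mathcal Q_1,\dots,\mathcal Q_K\}$. The key observation I want to reuse is that every remaining terminal pair in iteration $i$ is connected by some path in some $\mathcal Q_j$, so the restrictions of $\mathcal Q_1,\dots,\mathcal Q_K$ to the remaining terminals still form a (not necessarily minimum) operator cover of the remaining terminals; by averaging, at least one of them has size at least $|\mathcal T_i|/K$, where $\mathcal T_i$ denotes the set of terminal pairs not yet connected before iteration $i$.

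Next I would plug in the approximation guarantee. Because the true maximum operator EDP set in iteration $i$ has size at least $|\mathcal T_i|/K$, the $\kappa$-approximation algorithm returns some $\mathcal P_i$ of size at least $|\mathcal T_i|/(\kappa K)$. Consequently
\[
|\mathcal T_{i+1}| \;\le\; |\mathcal T_i|\Bigl(1-\tfrac{1}{\kappa K}\Bigr).
\]
Iterating and using $1-x \le e^{-x}$ yields $|\mathcal T_i| \le |\mathcal T|\, e^{-(i-1)/(\kappa K)}$, so all terminals are connected after $i = O(\kappa K \log |\mathcal T|)$ iterations, by the same geometric-decay argument cited as~\cite{Johnson1974} in the proof of \cref{thm:min-operator-sets}. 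This gives a $\mathcal T$-operator set of size $O(\kappa \log |\mathcal T|) \cdot K$, which is the claimed approximation ratio.

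The only step that needs a moment of care, and which I would spell out in the write-up, is the pigeonhole averaging after some terminals have been removed: one must verify that restricting the optimal cover $\{\mathcal Q_1,\dots,\mathcal Q_K\}$ to the surviving terminal pairs still yields valid operator EDP sets (which is immediate, since a subset of an edge-disjoint path set is still an edge-disjoint path set and the ``operator'' conditions in \cref{sec:long-range-cnot-bell} are preserved under taking subsets of paths). Everything else is a mechanical restatement of the proof of \cref{thm:min-operator-sets} with the factor of $\kappa$ carried through, so I do not anticipate any real obstacle beyond bookkeeping.
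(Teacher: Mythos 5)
Your proposal is correct and follows essentially the same route as the paper's own proof: modify the argument of the theorem so that each iteration the $\kappa$-approximation connects at least a $1/(\kappa K)$ fraction of the remaining terminal pairs, giving $O(\kappa K \log\abs{\mathcal T})$ iterations and hence the claimed ratio. Your write-up is in fact slightly more careful than the paper's (which states the per-iteration fraction somewhat loosely), but there is no substantive difference in approach.
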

\begin{proof}
We modify the proof of \cref{thm:min-operator-sets}
such that every iteration we connect a $(1-\kappa/K)$ fraction of unconnected terminal pairs
using the $\kappa$-approximation algorithm for finding maximum operator EDP sets.
Therefore we obtain a $\bigo{\kappa \log \abs{\mathcal T}}$-approximation algorithm for findining minimum $\mathcal T$-operator sets.
\end{proof}

The equivalence between operator EDP sets on the grid and EDP sets on the $\mathcal T$-operator graph motivates us to seek an efficient algorithm to find approximate maximum EDP sets on the $\mathcal T$-operator graph as a key part of our EDPC algorithm.
The algorithms we discussed in \cref{sec:edp-background} come close to doing this,
but some of them are intended for finding approximate maximum EDP sets on the grid rather than on the $\mathcal T$-operator graph and even if they are adapted,
the guarantees of the size of the approximate minimum EDP sets they produce may not apply in the case of the $\mathcal T$-operator graph.
The algorithms described in Refs.~\cite{Aumann1995,Kleinberg1995} for finding approximate maximum EDP sets on the grid do not directly apply to the operator graph.
While it seems straightforward to adapt the $\bigo{\log n}$-approximation algorithm~\cite{Aumann1995},
the algorithms in~\cite{Aumann1995,Kleinberg1995} are complex to implement and have large constant-factor overheads,
which can make them impractical on small instance sizes.

\begin{algorithm}[tbp]
	\caption{%
		\emph{Bounded $\mathcal T$-operator set algorithm}:
		An approximation algorithm for minimizing the $\mathcal T$-operator set size
		that combines the theoretical guarantees from \cref{thm:parallel-cnot-not-slow}
		with pragmatic performance using the greedy algorithm of \cref{thm:min-operator-sets}.
	}\label{alg:boundedTOperator}
	\Input{%
		$\mathcal T$ terminal pairs
	}
	$\mathcal Q_1 \gets$ the $\mathcal T$-operator set given by \cref{thm:parallel-cnot-not-slow} for $\mathcal T$\label{line:edpc-q1}\;
	$\mathcal Q_2 \gets \emptyset$\;
	\While(\tcp*[f]{Greedy apx.\ minimum-size $\mathcal T$-operator set}){$\mathcal T \ne \emptyset$}{%
		$\mathcal P \gets$ approximately maximize operator EDP set using greedy EDP algorithm~\cite{Kolliopoulos2004} on operator graph\;
		remove connected terminal pairs in $\mathcal P$ from $\mathcal T$\;
		$\mathcal Q_2 \gets \mathcal Q_2 \cup \set{\mathcal P}$\;
	}
	\Return minimum-size set between $\mathcal Q_1$ and $Q_2$\;
\end{algorithm}

In EDPC, we instead combine the theoretical worst-case bounds of \cref{thm:parallel-cnot-not-slow}
with the pragmatic performance of a greedy approach,
which does not have a large constant overhead,
in \cref{alg:boundedTOperator}.
By \cref{thm:parallel-cnot-lowerbound} this gives us asymptotically tight performance in the worst-case.
The runtime of this algorithm is dominated by $\bigo{\abs{\mathcal T}}$ iterations
of approximately maximizing the operator EDP set in time $\bigo{\abs{\mathcal T} n \log n}$.
We leave it as an open question to find better approximation algorithms for finding maximum operator EDP sets
that give improved performance outside the worst-case
and that may also improve the runtime since less iterations over $\mathcal T$ are required.

\section{Remote rotations with magic states}\label{sec:magicState}

Thus far we have discussed the surface code compilation of all the input circuit operations listed in \cref{sec:intro}
except for the single-qubit rotation gates $S=Z(\pi/4)$ and $T = Z(\pi/8)$.
In this section we design a subroutine for the compilation of parallel rotation circuits.
The $S$ and $T$ gates can be implemented by using specially prepared \emph{magic states} $\ket S$ and $\ket T$, respectively.
Magic states can be prepared using a highly-optimized process known as \emph{magic state distillation}~\cite{Knill2004},
which distills many faulty magic states that are easy to prepare into fewer robust states. 
Still, producing both $\ket S$ and $\ket T$ involves considerable overhead.
The $\ket S$ state is used to apply the $S$-gate in a `catalytic' fashion,
whereby the state $\ket S$ is returned afterwards.
On the other hand, the state $\ket T$ is consumed to apply the $T$-gate. 
The reason for this distinction is rooted in the fact that the $S$-gate is Clifford but the $T$-gate is non-Clifford.

In this work, we do not address the mechanism by which magic states are produced,
but instead assume that these states are provided at specific locations where they can be used to implement gates.
More specifically, we assume rotation gates $S$ and $T$ (and also Clifford variations of these such as $X(\pi/8) = T_x$ and $X(\pi/4) = S_x$) can be applied as a resource on specific ancilla qubits $B \subseteq V(G)$
at the boundary of a large array of logical qubits (\cref{fig:delayed-remote-to-the-boundary}).
This will allow sufficient space outside the boundary where highly-optimized magic state distillation and synthesis circuits can be implemented.
Because a large number of magic states are used in the computation,
we consider having magic state distillation adjacent to and concurrent with computation we are concerned with in this paper
to be a reasonable allocation of resources.

\begin{figure}[tb]
	\centering
	\begin{subfigure}[b]{0.475\textwidth}
    	\centering
    	\includegraphics[scale=0.6]{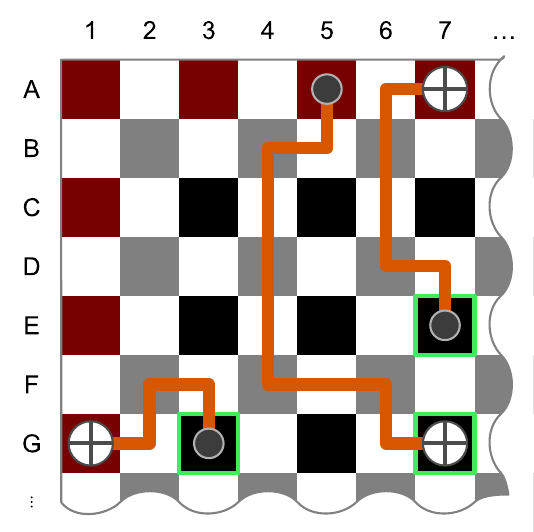}
    	\caption{Long-range \cnot{}s for diagonal gates}%
    	\label{fig:delayed-remote-to-the-boundary}
	\end{subfigure}%
	\hfill%
	\begin{subfigure}[b]{0.475\textwidth}
    	\centering
    	\includegraphics[scale=0.6]{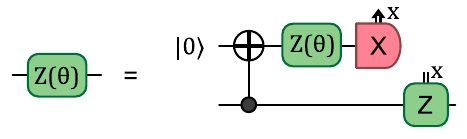}
    	\includegraphics[scale=0.6]{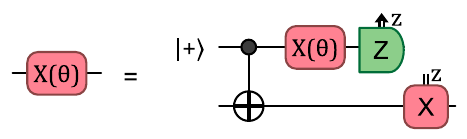}
    	\caption{Remote $Z(\theta)$ and $X(\theta)$}%
    	\label{fig:delayed-remote-gate}
	\end{subfigure}
	\caption{%
			We assume the capability of performing $S$ and $T$ gates at the boundary qubits (red)
        	where it is easy for us to supply the requisite $S$ and $T$ magic states.
			We can then execute $S$ or $T$ gates in the $Z$ or $X$ basis
			for our circuit by using long range \cnot{}s and the circuits in \cref{fig:delayed-remote-gate}.
        	For example, to execute $S$ or $T$ on qubits G3, E7 and $HTH$ on G7, we apply long-range
        	\cnot{}s between pairs (G3,G1), (E7,A7), (A5,G7) and then execute $S$ or $T$ on G1, A7, HTH on A5.
        	We can continue applying other Clifford gates to qubits G3, E7, and G7
        	right after performing the long-range \cnot{}, without waiting for the $Z$ correction,
        	since we can propagate the correction through Clifford operations.
		}\label{fig:delayed-remote}
\end{figure}

We need a technique to apply \emph{remote rotations} to data qubits which can be far from the boundary making use of the rotations that can be performed at the boundary.
We make use of the property that any $Z$ rotation (including $T$ or $S$) has the same action when applied to either qubit in
the state $\alpha\ket{00} + \beta\ket{11}$.
In particular, these two qubits need not be close to one another,
so we can apply $Z$ rotations \emph{remotely}.
A similar notion holds for $X$-rotations (including $T_x = HTH$ or $S_x = HSH$) and $\alpha\ket{++} + \beta\ket{--}$.
Given a qubit $q$ that needs to perform a $Z$ rotation requiring a magic state,
we apply a \emph{remote $Z$-rotation} (\cref{fig:delayed-remote-gate}):
by performing a long-range \cnot{$(q,q')$} to a boundary ancilla $q' \in B$ prepared in $\ket{0}$.
Therefore we can apply the $Z$ rotation remotely and use an $X$ measurement on $q'$ to collapse the state back to one logical qubit.
Similarly, for a qubit $q$ that needs to perform an $X$ rotation requiring a magic state,
we apply a long-range \cnot{$(q',q)$} to an ancilla $q'$ prepared in $\ket{+}$ on the boundary,
giving
\begin{equation}
    \cnot{}\ket{+}(\alpha \ket + + \beta \ket -) = \alpha \ket{++} + \beta \ket{--}\,.
\end{equation}
Therefore we apply the $X$ rotation remotely and collapse the state back by a single-qubit $Z$ measurement of $q'$.

The task of compiling a parallel rotation circuit therefore reduces to applying a set of \cnot{} gates from the boundary to the sites of the rotation gates.
This can be achieved by finding an appropriate EDP set and running the EDP subroutine of \cref{alg:edp-protocol}. 
Compared to the task of finding an EDP set for parallel \cnot{} gates of \cref{sec:edge-disjoint-path-algos}, there is one simplifying condition here:
Any boundary qubit can be used for each \cnot{} when applying remote rotations.
As we explain below, we can find the maximum EDP set for the compilation of remote rotations
by solving the following (unit) \MaxFlow{} problem~\cite{Kleinberg2006:NetworkFlow}.
\begin{definition}[\MaxFlow{}]
    Given a directed graph $G$ and source and sink vertices $s,t \in V$,
    we wish to find a flow for all edges of $G$, $f(e)\colon E(G) \to \mathbb R$,
    that is skew symmetric, $f((u,v)) = -f((v,u))$,
    and, for $v \in V(G) \setminus \set{s,t}$, must respect the constraints
    \begin{align}
    	f(e) &\leq 1\\ 
    	\text{and} \sum_{u : (v,u) \in E(G)} f((v,u)) &= 0
    \end{align}
    such that the outgoing source flow $\abs{f} \coloneqq \sum_{u : (s,u) \in E(G)} f((s,u))$ is maximized.
\end{definition}
To understand why this yields a maximum EDP, we first point out that a solution for which $f$ has binary values provides an EDP set by building paths from those edges $e$ for which $f(e)=1$.
Moreover, this EDP set must be maximum, because a larger EDP set would imply a larger flow than $f$, which is the maximum flow by definition.
Indeed the Ford-Fulkerson algorithm~\cite{Ford1956} solves \MaxFlow{} in runtime bounded by $\bigo{\abs{E(G)}\abs{f}}$
and finds flow values $f(e) \in \set{0,1}$ on all $e \in E(G)$
because of the unit capacity constraints, $f(e) \leq 1$.
Therefore, $f$ corresponds to a maximum EDP set~\cite[Section~7.6]{Kleinberg2006:NetworkFlow}.

\begin{algorithm}[tbp]
	\caption{%
		\emph{Remote rotation subroutine}: executes parallel single qubit rotations that require magic states at the boundary
		by a \MaxFlow{} reduction.
		Using the EDP subroutine (\cref{alg:edp-protocol}),
		we can perform remote rotations (\cref{fig:delayed-remote})
		on each set of qubits connected to the boundary by $\mathcal P$
		in depth 4.
		}\label{alg:magic-rotations}
	\Input{%
		Connectivity graph $G$ with vertices corresponding to boundary qubits $B \subseteq V(G)$
		and a set of parallel rotations $\mathcal G_m$\;
	}
	\SetKwFunction{KwMaxRotations}{max\_rotations}
	\Fn{\KwMaxRotations{$\mathcal G_m$}}{%
		$W \gets$ vertices associated with qubits in $\mathcal G_m$\;
		create virtual vertices $s$ and $t$\;
		$G' = (V(G) \cup \set{s,t}, E(G) \cup \set{(s, s') \mid s' \in W} \cup \set{(t', t) \mid t' \in B})$\;
		$f \gets$ solve \MaxFlow{} on $G'$ using the Ford-Fulkerson algorithm\;
		$\mathcal P \gets$ construct edge-disjoint set of $s$--$t$ paths from $f$\;
		\Return $\mathcal P$ with $s$ and $t$ removed from each $P \in \mathcal P$\;
	}
	\While{$\mathcal G_m$ is not empty}{%
		$\mathcal P \gets$ \KwMaxRotations{$\mathcal G_m$}\;
		\execute remote rotations at boundary with EDP subroutine given $\mathcal P$\;
		remove executed rotations from $\mathcal G_m$\;
	}
\end{algorithm}

The \emph{remote rotation subroutine} (\cref{alg:magic-rotations}) executes a set of parallel single-qubit rotations.
Each iteration can be performed in depth 4 using the EDP subroutine.
On the surface code architecture,
we can give strong guarantees on the number of iterations required to execute a set of parallel rotations
by the \MaxFlow{} to min-cut equivalence.

\begin{theorem}\label{thm:rotation-depth}
	The remote rotation subroutine executes all rotations in $\mathcal G_m$ in depth $\bigo{\sqrt{\abs{\mathcal G_m}}}$.
\end{theorem}
\begin{proof}
	The function \KwMaxRotations{$\mathcal G_m$} that is a part of the remote rotation subroutine
	finds a maximum flow connecting the data qubits performing rotations to the boundary
	where every additional unit of flow is one more rotation executed.
	This maximum flow is equal to the minimum edge-cut separating the data qubits from the boundary~\cite{Ford1956}.
	The boundary of a rectangle containing $\abs{\mathcal G_m}$ vertices on the grid is of size $\bigomega{\sqrt{\abs{\mathcal G_m}}}$,
	giving a minimum cut size of $\bigomega{\sqrt{\abs{\mathcal G_m}}}$.
	Thus, at most $\bigo{\sqrt{\abs{\mathcal G_m}}}$ iterations
	of the while loop in the remote rotation subroutine are necessary to implement all remote rotations,
	as claimed.
\end{proof}

We bound the runtime of the remote remote rotation subroutine by $\bigo{n^2 \sqrt{\abs{\mathcal G_m}}}$ as follows:
At most $\bigo{\sqrt{\abs{\mathcal G_m}}}$ iteration of the while loop are necessary (see proof of \cref{thm:rotation-depth}).
Each iteration, the call to \KwMaxRotations{$\mathcal G_m$} is dominated by
solving a \MaxFlow{} instance using the Ford-Fulkerson algorithm~\cite{Ford1956},
which has a runtime bounded by $\bigo{n^2}$.

One could consider a number of generalizations and variations of this compilation subroutine for parallel rotation circuits.
For instance, when the number of rotation gates is small,
it may be useful to find VDP sets rather than EDP sets
so that the VDP subroutine rather than the EDP subroutine can be applied.
There is a different reduction to \MaxFlow{} in this case which can be obtained by replacing each vertex with two vertices,
one with incoming edge and one with outgoing edges,
connected by a directed edge with capacity 1.
This guarantees only one flow can pass through every vertex.

Although we do not consider other single-qubit rotations in our input circuit for compilation, it is worth noting that any single-qubit rotation gate $Z(\theta)$ can be approximately synthesized to arbitrary precision~\cite{RossSelinger2014} using $\ket{S}$ and $\ket{T}$ states along with the surface code operations shown in \cref{fig:logical-layout}.
The approach used to apply $S$ and $T$ gates shown in \cref{fig:delayed-remote-to-the-boundary} can also be used to apply any rotation $Z(\theta)$ within the grid of surface codes by synthesizing the rotation at the boundary. 
However, if one considers more general rotations in the input circuit, the time needed for synthesis at the boundary will need to be accounted for and accommodated by other aspects of the overall surface code compilation algorithm.
Another extension that can be considered is if multi-qubit diagonal gates are allowed in the input circuit. 
We show how $X$ and $Z$ rotations generalize to multi-qubit diagonal gates in \cref{sec:RemoteGate},
although we do not use this in our surface code compilation algorithm.

\section{EDPC surface code compilation algorithm}\label{sec:EDPCalogrithm}
In this section we construct the EDPC algorithm for compiling universal input circuits into surface code operations
by combining subroutines \cref{alg:edp-protocol} and \cref{alg:magic-rotations} for compiling long-range \cnot{}s
and $Z/X$ rotations respectively.
First we provide a more formal definition of surface code compilation:
	
\begin{definition}[Surface code compilation]
	Consider an input quantum circuit of operations $\mathcal C = g_1 g_2 \dots g_\ell$,
	which is a list of length $\ell$ of operations $g_i$ for $i \in [\ell]$,
	consisting of:
	state preparation in $X$ or $Z$ basis;
	the single-qubit operators $X$, $Y$, $Z$, $H$, $S$, $T$, $S_x = HSH$, $T_x = HTH$;
	\cnot{} operations;
	and $X,Z$-measurements.
	Then a surface code compilation produces an equivalent output circuit $\mathcal O$
	in terms of surface code operations (\cref{fig:logical-layout}) on a grid of surface codes
	with $S$, $T$, $S_x$, and $T_x$ rotations applied only at the grid's boundary.
\end{definition}

\begin{algorithm}[tbp]
	\caption{%
		\emph{EDPC}: a surface code compilation algorithm for any circuit $\mathcal C = g_1 \dots g_\ell$.
		An operation $g_i$ is \emph{available} if it has not been executed
		and all operations $g_j$ with overlapping support, for $j < i$, are executed.
	}\label{alg:edpc}
	\Input{Circuit $\mathcal C$ with Paulis commuted to the end and merged with measurement}

	\While{available operations in $\mathcal C$}{%
		\execute all available state preparation, measurement, and Hadamard\;

		run remote rotation subroutine on available rotations\;
		$\mathcal T \gets$ terminal pairs associated with available \cnot{}s\;
		$\mathcal Q \gets$ run bounded $\mathcal T$-operator set \cref{alg:boundedTOperator} on $\mathcal T$\;
		\For{$\mathcal P \in \mathcal Q$}{%
			run EDP subroutine (\cref{alg:edp-protocol}) on $\mathcal P$\;
		}
	}
\end{algorithm}

The surface code compilation algorithm EDPC (\cref{alg:edpc})
combines combines the bounded $\mathcal T$-operator set algorithm for parallel \cnot{}s
with the remote rotation subroutine.
Note that the input circuit is considered to be a sequence of operations rather than a series of time steps 
that specify the operations in each time step, such that $l$ is the number of operations of the input circuit, not the depth.

We bound the classical runtime of EDPC
given an input circuit with depth $D$ acting on $n$ qubits.
It is useful to note that each of the $D$ layers of the input circuit can be decomposed into a set of parallel rotations followed by a set of parallel \cnot{}s, each acting on at most $n$ qubits.
Recall that the remote rotation subroutine has a runtime bounded by $\bigo{n^{2.5}}$,
whereas compiling a set of parallel \cnot{}s has a runtime of at most $\bigo{n^3 \log n}$.
Thus, EDPC has a runtime bounded by $\bigo{D n^3 \log n}$.

Circuits compiled by EDPC can be bounded in depth as listed in \cref{tab:comparison}.
Our claim for a single \cnot{} is trivial.
\cref{thm:parallel-cnot-not-slow,thm:parallel-cnot-lowerbound} show that parallel \cnot{} circuits are compiled to a depth of $\bigtheta{\sqrt n}$,
and \cref{thm:rotation-depth} shows that $k$ parallel rotations are compiled to a depth of $\bigo{\sqrt k}$.
It is then easy to see that a circuit of depth $D$ compiles to a circuit of depth at most $\bigo{D\sqrt n}$.
If we assume a remote rotation must be performed for each rotation requiring magic states at the boundary
(in particular, it requires a long-range \cnot{} as in EDPC),
then \cref{thm:parallel-cnot-lowerbound} shows an $\bigomega{\sqrt k}$ lower bound on the depth to apply $k$ \cnot{} operations with the boundary.

There are various modifications of EDPC that are worth considering.
Firstly, the bounded $\mathcal T$-operator set algorithm (\cref{alg:boundedTOperator})
can be improved by better algorithms for finding maximum operator EDP sets.
Secondly, the requirement to execute all available gates before moving on to the next set could be relaxed.
This could increase the number of long-range gates that are performed in parallel
but would require careful scheduling with Hadamard gate execution,
which may block some paths.
Lastly, EDPC leans heavily on finding operator EDP paths and the EDP subroutine,
but a similar surface code compilation algorithm could be constructed from operator VDP paths and the VDP subroutine instead.
We believe that larger maximum EDP sets
allows EDPC to apply more gates simultaneously (see \cref{sec:edp-background}),
and more so if algorithms for approximation maximum operator EDP sets can adopted from EDP approximation algorithms~\cite{Aumann1995,Kleinberg1995}.
Both of these features can give asymptotic improvements at only a $2\times$ depth increase over the VDP subroutine.
However, it is not difficult to construct instances where a VDP-based approach would give a lower depth,
motivating a more nuanced trade-off between our EDP-based approach and a VDP-based approach.

\section{Comparison of EDPC with existing approaches}\label{sec:results}
In this section, we compare EDPC with other approaches in the literature.
We first mention some of the features and short-comings of the well-established approach of Pauli-based computation \cref{sec:pauli-based}.
Then we address a more recently proposed compilation approach based on network coding in \cref{sec:networkCoding}.
In \cref{sec:compilation-swap} we specify a \sw{}-based compilation algorithm~\cite{Childs2019}
and use this as a benchmark for numerical studies of the performance of an implementation of EDPC in \cref{sec:numerical-results}.

\subsection{Surface code compilation by Pauli-based computation}\label{sec:pauli-based}

One well-established surface code compilation approach is known as \emph{Pauli-based computation}, which is described 
in \cite{Litinski2019}. 
For an algorithm expressed in terms of Clifford and $T$ gates,
Pauli-based computation first involves re-expressing the algorithm as
a sequence of joint multi-qubit Pauli measurements along with additional ancilla qubits prepared in $T$ states.
This re-expressed circuit has no Clifford operations,
and the circuit depth can be straight-forwardly deduced from the input circuit since each $T$ gate results in two~\cite{Chamberland2021} joint Pauli measurements\footnote{%
	In the scheme presented in~\cite{Litinski2019} only one joint Pauli measurement is needed per $T$ gate,
	but additional features are required of the surface code such as twist defects
	which were avoided in~\cite{Chamberland2021}, and which we have avoided in this paper.}. 
This re-expression of the circuit essentially comes from first replacing each $T$ gate
by a small gate teleportation circuit consisting of an ancilla in a $T$ state and a two-qubit joint Pauli measurement,
and then commuting all Clifford operations to the end of the circuit.
The main advantage of the Pauli-based computation approach is that all Cliffords are removed from the input circuit,
resulting in no cost for \cnot{} circuits in \cref{tab:comparison}.

That said, this approach has a major drawback.
When a Clifford circuit is commuted through a two-qubit joint Pauli measurement,
it is transformed into Pauli measurements which can have support on all logical qubits.
Therefore, the resulting circuit may contain measurements with large overlapping support
that need to be performed sequentially
(even when the $T$ gates in the input circuit were acting on disjoint qubits during the same time step).
The sequential nature of the joint measurements causes a fixed rate of $T$-state consumption
that does not grow with the number of logical qubits
and results in a $\bigtheta{k}$ depth for $k$ parallel rotations, as listed in \cref{tab:comparison}.
The depth for parallel rotations is signifcantly higher than EDPC
and could lead to a larger space-time cost for circuits with many $T$ gates per time step.


A modified version of this Pauli-based computation compilation algorithm can be used to implement more $T$ gates in parallel~\cite[Section~5.1]{Litinski2019}.
However, as highlighted in~\cite[Section~V.A]{Chamberland2021},
this results in a significant increase of total logical space-time cost
when compared to the standard Pauli-based computation compilation algorithm,
even when disregarding the increased $T$-factory costs that would be needed to achieve a higher $T$ state production rate.

In contrast with Pauli-based computation,
one of our goals when designing the EDPC algorithm was to maintain the parallelism present in the input circuit,
such that input circuits with higher numbers of $T$ gates per time step are compiled to circuits with a higher $T$-state consumption rate.

\subsection{Surface code compilation by network coding}\label{sec:networkCoding}
Another approach to surface code compilation, based on the field known as \emph{linear network coding}~\cite{Ahlswede2000}, can be built from the framework put forward in~\textcite{Beaudrap2020a}.
Similar to our EDPC algorithm, the essential idea in this compilation scheme is to generate sets of Bell pairs in order to implement operations acting on pairs of distant qubits.

In the abstract setting of network coding~\cite{Lehman2004}, one is given a directed graph $G_{\text{NC}}$ and a set of terminal pairs $\mathcal T = \set{(s_1, t_1), \dots, (s_k, t_k)}$ for source terminals $s_i \in V(G_{\text{NC}})$ and target terminals $t_i \in V(G_{\text{NC}})$ for $i \in [k]$.
Messages are passed through edges according to a linear rule. 
Namely, the value of the message associated with an edge is given as a specific linear combination of the values of those edges which are directed at the edge's head. 
One can consider the task of ``designing a linear network code''
by specifying the linear function at each edge in the graph
such that when any messages are input via the source vertices $s_1, \dots, s_k$,
then those same messages are copied over to the corresponding output via the target vertices $t_1, \dots, t_k$.

A number of works have considered how linear network coding theory can be applied to the quantum setting~\cite{Leung2010,Kobayashi2009,Kobayashi2011,Satoh2012,Hahn2019}.
\textcite{Beaudrap2020a} gives a construction for a constant-depth circuit to generate Bell pairs across the terminal pairs $\mathcal T$
on a set of ancilla qubits corresponding to the vertices of $G_{\text{NC}}$ with \cnot{}s allowed on the edges of $G_{\text{NC}}$. 
This is similar to, but not precisely the same scenario as we consider for surface code compilation in this paper since the basic operations are \cnot{}s rather than the elementary operations of the surface code, and since only ancilla qubits are considered without any data qubits.
However, it should be quite straightforward to modify the approach in \textcite{Beaudrap2020a} to form a surface code compilation algorithm.
For example, one could use a layout similar to that which we use for EDPC in \cref{fig:edpSchedule}, with $G_{\text{NC}}$ corresponding to a connected subset of ancilla qubits among a set of data qubits.
The Bell pairs produced by the linear network coding approach could then be used to compile long-range operations between data qubits.

In such a network coding based compilation algorithm, the task of compiling an input circuit into surface code operations would largely rely on subroutines for
(1) identifying $\mathcal T$ to implement the circuit's long-range gates,
and (2) designing a linear network code for $\mathcal T$.
A major barrier to forming a usable compilation algorithm with linear network coding is that we are unaware of the existence of any efficient algorithm to design linear network codes,
or even to identify if a given terminal pair set admits any linear network code.
Even if such a linear network code can be found efficiently,
there exist sets $\mathcal T$ for which network coding cannot provide a depth advantage over EDPC.

Any surface code compilation algorithm of \cnot{} circuits with $k$ parallel \cnot{}s,
including EDPC and algorithms using network coding,
is lower bounded in the worst case by \cref{thm:parallel-cnot-lowerbound} to a depth of $\bigomega{\sqrt k}$.
This bound is loose when $k$ is superconstant and sublinear in $n$
since EDPC has a trivial upper bound of $\bigo{k}$
and a bound of $\bigo{\sqrt n}$ by \cref{thm:parallel-cnot-not-slow}
on the compiled circuit depth.
Therefore, it remains an open question whether network coding
can give an advantage for the compiled circuit depth for such $k$.

\subsection{Surface code compilation by \sw{}}\label{sec:compilation-swap}
Here we specify a \sw{}-based compilation algorithm, stated in \cref{alg:swap-compilation}, which we use to benchmark our EDPC against in \cref{sec:numerical-results}.
We assume the 1-to-1 ancilla-to-data qubit ratio as illustrated in \cref{fig:swapSurfaceCode}. 
This is more qubit-efficient than the 3-to-1 ratio we use for EDPC, and it allows the swap gadget in \cref{fig:swap-via-move} to be implemented between diagonally neighboring data qubits.

\begin{figure}
	\centering
	\includegraphics[scale=0.6]{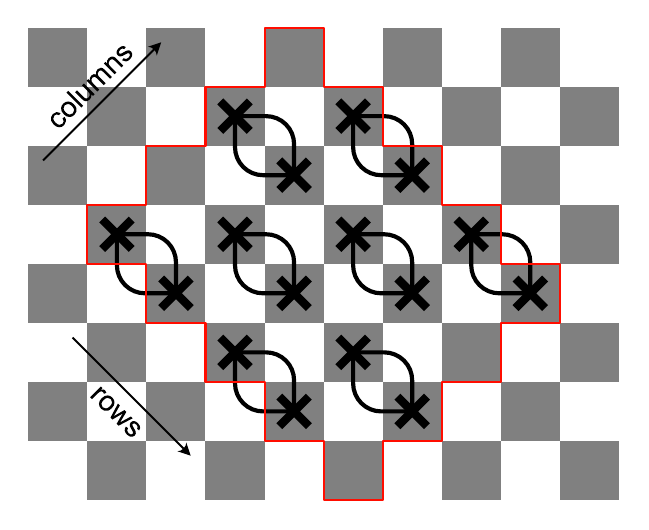}
    \caption[Odd-even pattern of swaps.]{%
    	On a rotated $L_1 \times L_2$ grid (here, $4\times 5$),
    	we can implement an odd-even pattern of swaps on data qubits (gray) using ancillas (white).
        Row-wise and column-wise \sw{}s used in \sw{} routing on a grid~\cite{Alon1994}
        can be modified as shown above so that ancilla used for \sw{}s do not overlap.
        Therefore, any arbitrary permutation on a rotated grid of
        can be implemented in space-time $4(L_1+1) + 2(L_2+1)$.
    }\label{fig:swapSurfaceCode}
\end{figure}

The first step of the \sw{}-based compilation algorithm is to assign each of the input circuit's qubits to a data qubit in the layout.
Then, the gates in the input circuit are collected together into sets of disjoint gates. 
Before each set of gates, a permutation built from \sw{}-gates is applied, which re-positions the qubits so that the gates in the set can be applied locally. 
We assume that the available local operations are the same as for our EDPC algorithm. 
In particular, we assume that the rotation gates ($S$, $T$, $S_x$ and $T_x$) can only be implemented at the boundary and that other single-qubit operations are performed as described in \cref{sec:natural-gates}. 
One exception is that we make the simplifying assumption that the Hadamard can be performed without the need of three ancilla patches to simplify our analysis -- this assumption could lead to an underestimate of the resources required for this \sw{}-based compilation algorithm.

\SetKwFunction{route}{route}
\SetKwFunction{circDepth}{depth}
\SetKwFunction{swCost}{cost}
\begin{algorithm}
	\caption{%
		\emph{\sw{} compilation}: 
		We construct an algorithm based on the \emph{greedy depth} mapper algorithm from~\cite{Childs2019}.
		Let us implicitly define \protect\route{$\pi$}, for mapping $\pi$,
		which finds a \sw{} circuit for implementing partial permutations~\cite{Childs2019}.
		We can compute the required partial permutation from the current mapping of qubits,
		and the given future mapping $\pi$.
	}\label{alg:swap-compilation}
	\Input{A circuit $\mathcal C$ with all Paulis commuted to the end and merged with measurement}
	\Fn{\swCost{mapping $\pi$, vertices $v_1, v_2$}}{%
		\Return depth and edge attaining $\min_{e \in \mathcal M} \text{\circDepth{\route{$\pi + \set{v_1 \mapsto e_1, v_2 \mapsto e_2}$}}}$
	}

	\While{available gates in $\mathcal C$}{%
		$\mathcal G \gets$ available gates in $\mathcal C$\;
		\execute all Hadamards and measurements in $\mathcal G$\;

		$G \gets$ surface code grid graph\;
		$\pi \gets$ empty mapping of $V(G) \to V(G)$\;
		\tcp{Start modification for operations requiring magic states}
		Set $B \subseteq V(G)$ as the set of boundary vertices\;
		$\mathcal G_m \gets \set{g \in \mathcal G \mid g \text{ is } S, T, S_x, T_x}$\;
		\While{$\mathcal G_m \not= \emptyset$ and $B \not= \emptyset$}{%
			$g \gets$ pop random gate from $\mathcal G_m$\;
			$\pi \gets \pi + \set{v \mapsto u}$, for closest $u \in B$ to $v$\;
			remove $u$ from $B$ and $G$\;
		}

		\tcp{End modification}
		$\mathcal M \gets$ maximum matching of $G$\;
		$\mathcal G_c = \set{g \in \mathcal G \mid g \text{ is } \cnot{}}$\;
		\While{$G_c \not= \emptyset$ and $\mathcal M \not= \emptyset$}{%
			$g^*, e^* \gets \max_{g \in \mathcal G_c}$ \swCost{$\pi + \set{v_1 \mapsto e_1, v_2 \mapsto e_2}$} for $v_1,v_2$ current location of $g$\;
			$\pi \gets \pi + \set{v_1 \mapsto e_1, v_2 \mapsto e_2}$, for $v_1,v_2$ current location of $g^*$\;
			remove $g^*$ from $\mathcal G_c$\;
			remove $e^*$ from $\mathcal M$\;
		}

		\execute the \sw{}s found by \route{$\pi$}\;
		\execute gates on qubits mapped by $\pi$ since they are now local\;
	}

\end{algorithm}

There are two main components of our \sw{}-based algorithm which remain to be specified: how the permutations are implemented, and how we choose to separate the input circuit into a sequence of sets of disjoint gates.
To permute the positions of data qubits, sequences of \sw{} operations are used. 
Any permutation of the $n$ vertices in a square grid can be achieved in at most $3\sqrt{n}$ rounds of nearest-neighbor swaps~\cite{Steiger2018}.
To do this involves three stages, with the first and third stages each involving rounds of \sw{}-gates within rows only, and the second stage involving rounds of \sw{}-gates within columns only. 
A round of \sw{}-gates within either rows only or within columns only are implemented with surface code operations as shown in \cref{fig:swapSurfaceCode}.
This immediately shows that this approach is asymptotically tight for parallel circuits
because the depth of a \sw{}-based approach is lower bounded by the $\sqrt{n}$ diameter of the architecture grid
for one long-range \cnot{} or rotation gate from the center of the grid.
Therefore a parallel input circuit is compiled by the \sw{}-based algorithm to an output circuit with depth $\bigtheta{\sqrt{n}}$, including all the examples in \cref{tab:comparison}.

There is considerable freedom in how to collect together gates from the input circuit into sets of disjoint gates. 
In our implementation in \cref{alg:swap-compilation}, we use the \emph{greedy depth} mapper algorithm from~\cite{Childs2019}, with a small modification to ensure that $S$ and $T$ gates are performed at the boundary. 
This algorithm also incorporates some further optimizations as described in~\cite{Childs2019}, including a partial mapping of qubits to locations,
leaving the remaining qubits to go anywhere in an attempt to minimize the \sw{} circuit depth.

\subsection{Numerical results}\label{sec:numerical-results}
\begin{figure}[tb]
	\includegraphics[width=\linewidth]{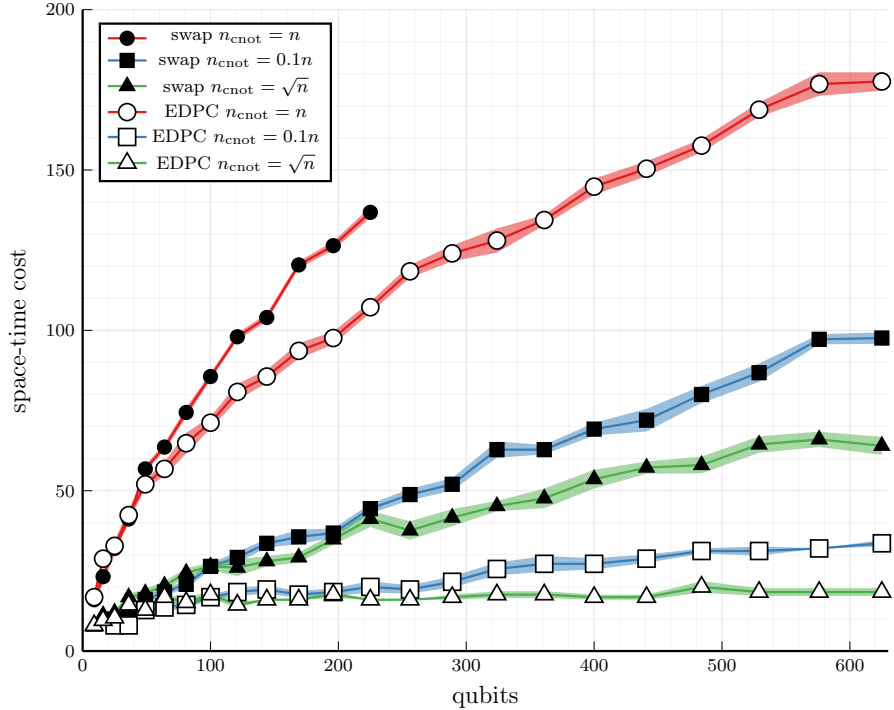}
	\caption{%
		Space-time cost of a randomly sampled set of disjoint \cnot{}s with standard error of the mean (shaded region)
		compiled to the surface code using EDPC and \sw{} compilation.
		We generate 10 random circuits for each number of qubits ($n$) consisting of a set of disjoint \cnot{}s of varying density;
		the number of randomly selected qubits involved in a \cnot{} is given by $n_{\cnot{}}$.
		At all densities we see improved performance and scaling using EDPC.
	}\label{fig:benchmark-random-circuits}
\end{figure}

Here we numerically compare the performance of EDPC with the \sw{}-based compilation algorithm (\cref{alg:swap-compilation}) when applied to a number of different input circuits.
Note that our implementation of the EDPC compilation algorithm here differs slightly from that given in \cref{alg:edpc},
by greedily executing \cnot{}s earlier where possible.
See \cref{app:edpcImplementation} for details of the implementation.

Our first input circuit example consists of random parallel \cnot{} circuits of different gate densities.
The density $n_\cnot{}$ of a circuit is how many of the data qubits are involved in a \cnot{} gate in any such set.
Therefore, $n_\cnot{} = 0.1 n$ means that $10\%$ of all qubits ($n$) are performing a \cnot{} gate in each set.
For each data point, we sample 10 random circuits and plot the mean space-time cost in \cref{fig:benchmark-random-circuits}
with the standard error of the mean in the shaded region.
The runtime of the \sw{} protocol was bounded by 2 days, which was insufficient for larger instances of these random circuits at high densities.

\begin{figure}[tbp]
	\begin{subfigure}[b]{0.3\textwidth}%
	\centering%
	\includegraphics[width=\textwidth]{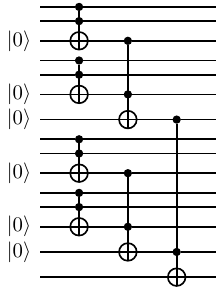}
	\caption{Half \cknot{8}}\label{fig:multi-controlled-x-half-circuit}
	\end{subfigure}%
	\begin{subfigure}[b]{0.7\textwidth}
		\centering
		\includegraphics[width=\textwidth]{MultiControlledXHalfEstimator.tikz}
		\caption{%
			Compilation of half \cknot{k}
		}\label{fig:multi-controlled-x-half-results}
	\end{subfigure}
	\caption{%
		We compare the space-time cost of compiling a $T$-gate optimized circuit decomposition
		for a half \cknot{k} circuit to the surface code
		using EDPC and \sw{} compilation.
		We see in the log-log plot (\subref{fig:multi-controlled-x-half-results})
		that dependence of the space-time cost on $n$
		gives a higher scaling dependence in the case of \sw{} compilation than EDPC.
		This results a lower space-time cost for EDPC starting from 64 qubits.
	}\label{fig:multi-controlled-x-half}
\end{figure}

We also consider a more structured input circuit, namely implementing half of a multi-controlled-$X$ gate, \cknot{k}.
We consider decompositions of \cknot{k} for $k$ integer powers of $2$,
but only compile the first half of the circuit, given in \cref{fig:multi-controlled-x-half-circuit}.
A $T$-efficient implementation of \cknot{k} uses measurement and feedback for uncomputation~\cite{Jones2013},
which are not captured in our model (see \cref{sec:conclusion}).
We plot the space-time cost of compiling the half \cknot{k} in \cref{fig:multi-controlled-x-half-results}.
We see that the dependence on the number of qubits $k$ is worse for \sw{}-based compilation,
and results in a larger space-time cost starting at 64 qubits.
Unfortunately, the \sw{}-based compilation is quite slow:
we ran the algorithm for at most 3 days and 9 hours at each data point
and were only able to obtain results up to 128 qubits.
However, the data we were able to obtain indicates a cross-over for compiling \cknot{k} circuits. The \sw{}-based compilation has better space-time performance for small instances, while EDPC has a better space-time performance for compiling large \cknot{k} circuits.

\section{Conclusion}\label{sec:conclusion}
In this paper, we have introduced the EDPC algorithm for the compilation of input quantum circuits into operations which can be implemented fault-tolerantly with the surface code.
The heart of this algorithm lies in the EDP subroutine,
which can implement both sets of parallel long-range \cnot{} gates
and sets of parallel rotations in constant depth
using existing efficient graph algorithms to find sets of edge-disjoint paths.
EDPC has advantages over other compilation approaches including Pauli-based computation, network coding based compilation, and \sw{}-based compilation.
We numerically find that EDPC significantly outperforms \sw{}-based circuit compilation in the space-time cost of random \cnot{} circuits for a broad range of instances, and for larger \cknot{k} gates.
However, many details of EDPC can be improved,
as it is only a first step towards using long-range operations for surface code compilation.

EDPC requires sets of constrained edge-disjoint paths,
which we call operator paths and run almost entirely along ancilla qubits.
Better algorithms for finding maximum sets of edge-disjoint operator paths could improve EDPC.
It seems likely that an $\bigo{\log n}$-approximation algorithm for finding maximum EDP sets on grids~\cite{Aumann1995} can be modified
to give an algorithm for finding maximum sets of edge-disjoint operator paths on grids.
A polylogarithmic approximation algorithm for this task
would imply an approximation algorithm for minimizing the depth, up to a polylogarithmic factor,
of compiling parallel \cnot{}s using the EDP subroutine.
In practice, it is, however, also important to find approximation algorithms with reasonable constant prefactors.

The runtime complexity of EDPC for an input circuit of depth $D$ acting on $n$ qubits is $\bigo{D n^3 \log n}$.
This is significantly faster than the \sw{}-based compilation in \cref{sec:compilation-swap}, which was found to be $\bigo{D n^5}$ in \textcite{Childs2019}.
We found that our implementation of the \sw{}-based compilation implementation runtime is much slower than that of EDPC on small instances, and found that the \sw{}-based algorithm had impractically long runtimes when applied to circuits beyond a few hundred qubits,
the regime of large-scale applications of quantum algorithms~\cite{Reiher2017,Gidney2021}.
Potential ways to further improve EDPC's runtime
include using a dynamical decremental all-pair shortest path algorithm in the greedy
approximation of the maximum EDP set,
or by finding faster and better approximation algorithms for finding the maximum set of edge-disjoint operator paths.

Any diagonal gates in the $Z$ (or $X$) basis can be performed remotely on the boundary,
including CCZ gates~\cite{Gidney2019} (see \cref{sec:RemoteGate}).
Therefore, our results on applying $Z(\theta)$ rotations can be extended to diagonal gates,
which will benefit circuit depth.

Even with the capability to perform long-range operations it may still be helpful to localize the quantum information
on some part of the architecture
such as by permuting the data qubits.
In particular, the size of the EDP set is bounded above by the minimum edge cut separating the terminals.
Therefore, it may be beneficial to first redistribute quantum information where it is needed
to ensure large EDP solutions exist.
It is straightforward to construct a long-range move of a data qubit to an ancilla in depth 2
from a long-range \cnot{},
by performing the \cnot{} targeting a $\ket 0$ ancilla state and measuring the source in the $X$ basis up to Pauli corrections.
It also is straightforward to adapt the EDP subroutine to perform sets of these long-range moves along operator paths,
now ending at the ancilla, in depth 4.
The depth to permute only a few qubits a long distance can be improved significantly by this technique.
For example, a \sw{} of the two corners of an $L\times L$ grid architecture takes $\bigo{1}$ depth using long-range moves,
as opposed to $\bigomega{L}$ depth using conventional \sw{}s.
It remains an open question how to trade off permuting data qubits (using \sw{}s or long-range moves)
and directly using long-range \cnot{}s.

We have assumed that classical feedback is not present in the input circuit for clarity of presentation.
EDPC can readily be extended to the setting of classical feedback in the input circuit to form a ``just-in-time'' surface code compilation algorithm.
To do so, a larger computation would be broken up into a sequence of circuit executions without classical feedback,
where prior measurement results specify the next circuit to compile and execute.

\subsection*{Acknowledgements}
We would like to thank Alexander Vaschillo and Dmitry Vasilevsky for helpful discussion,
and Thomas Haner for coming up with a simple correctness proof for delayed remote execution gadget
and many helpful discussions.
A slightly modified version of this proof is presented in \cref{sec:RemoteGate}.

E.S.\ acknowledges support by a Microsoft internship, an IBM PhD Fellowship,
and the U.S.\ Department of Energy, Office of Science, Office of Advanced Scientific Computing Research,
Quantum Testbed Pathfinder program (award number DE-SC0019040).

\printbibliography%

\appendix
\section{Surface code architecture}\label{sec:surfaceCode}

Here we review some basic details of the surface code focusing on the elementary logical operations shown in \cref{fig:logical-layout}.
This is intended as a high-level overview to provide some intuition of how the logical operations in \cref{fig:logical-layout} arise and what their resource costs are.
For more thorough reviews of surface codes see Refs.~\cite{Bombin2013,Fowler2012,Brown2017}.

To implement the surface code, we assume \textit{physical qubits} are laid out on the vertices of a 2D grid, with nearest-neighbor interactions allowed. 
For concreteness, we will describe here an implementation of lattice surgery with the rotated surface code with half-moon boundary~\cite{Landahl2014}, although our EDPC algorithm can use other implementations.
A single surface code patch encodes a single \emph{logical qubit} in $2d^2-1$ physical qubits, where the odd parameter $d$ is known as the \emph{code distance} which corresponds to the level of noise protection; see \cref{fig:surfaceCodeBasics}(a). 
For clarity, within this section of the appendix we refer to physical qubits and logical qubits explicitly, however in other sections we often drop the word ``logical'' when referring to logical qubits for brevity. 

\begin{figure}
	\centering
	\begin{subfigure}[t]{0.5\textwidth}
		\centering
		\includegraphics[width=0.75\textwidth]{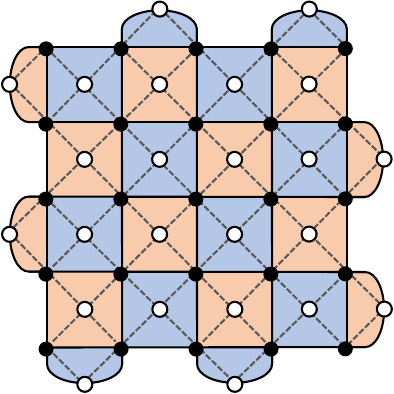}
		\caption{Surface code patch}\label{fig:surface-code-patch}
	\end{subfigure}%
	\begin{subfigure}[t]{0.5\textwidth}
		\centering
		\includegraphics[width=0.75\textwidth]{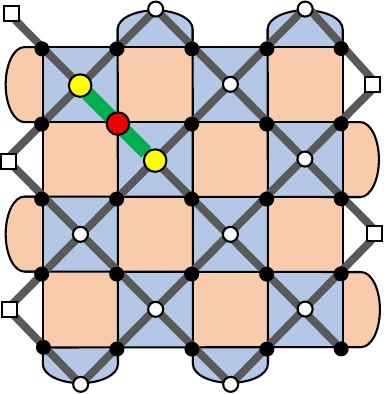}
		\caption{$X$-Stabilizer decoding graph}\label{fig:surface-code-decoding-graph}
	\end{subfigure}%
	\caption{%
		(a) A $d=5$ surface code patch implemented in a grid of data physical qubits (black disks), and ancilla physical qubits (white disks). 
		Error correction is implemented with single-qubit operations and \cnot{} between pairs of qubits connected by a dashed edge. 
		$Z$ and $X$ type stabilizers are associated with alternating red and blue faces.
		(b) A decoding graph that is defined by associating an edge with each qubit and a vertex for each stabilizer.
		If stabilizers are measured perfectly, $Z$ errors on data qubits (marked in red) can be corrected by finding a minimum weight matching (green edges) of vertices associated with unsatisfied $X$ stabilizers (yellow disks).
		}\label{fig:surfaceCodeBasics}
\end{figure}

We designate every odd physical qubit as a \emph{data physical qubit} in the patch, and every even physical qubit as an \emph{ancilla physical qubit} to facilitate a \emph{stabilizer} measurement; see \cref{fig:surfaceCodeBasics}(a).
The code space of a surface code consists of those states of the data physical qubits which are simultaneous $+1$ eigenstates of the set of stabilizer generators.
The stabilizer generators can be associated with faces and are either $X \otimes X \otimes X \otimes X$ or $Z \otimes Z \otimes Z \otimes Z$ operators for the bulk (interior) of the code or $X \otimes X$ or $Z \otimes Z$ operators on the boundary.
We can see that the logical $Z$ operator, $Z_L$,
defined as any path of single-qubit $Z$ operators on physical qubits connecting the rough boundaries,
commutes with all stabilizers.
Similarly, the logical $X$ operator, $X_L$, is a path of $X$ operators connecting the smooth boundaries.

For quantum error correction, it is necessary to repeatedly measure stabilizer generators.
Stabilizer generators can be measured by running small circuits consisting of the preparation of the ancilla physical qubit, \cnot{}s between the ancilla physical qubit and the data physical qubits, followed by measurement of the ancilla physical qubit.
Error correction can be performed by associating qubits with edges and stabilizer generators with vertices of a so-called decoding graph; see \cref{fig:surface-code-decoding-graph}. 
A classical algorithm known as a decoder is used to infer a set of edges (specifying the support of the X or Z correction) given a subset of vertices (corresponding to unsatisfied Z or X stabilizers, that is stabilizer generators with measurement outcome -1).
\cref{fig:surface-code-decoding-graph} shows an example of this in the setting of perfect stabilizer measurements, although this can be generalized to handle faulty measurements by repeating measurements.

\begin{figure}
	\centering
	\begin{subfigure}[t]{0.4\textwidth}
		\centering
		\includegraphics[width=0.6\textwidth]{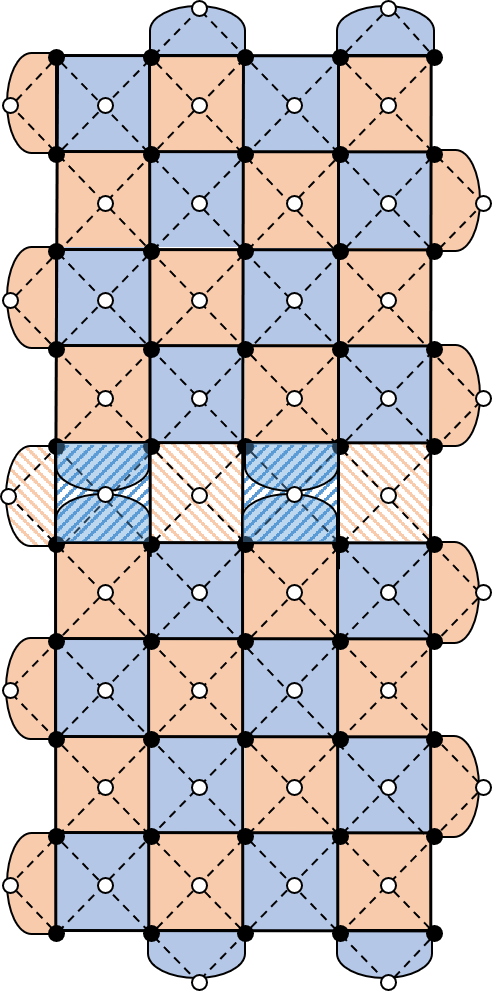}
		\caption{Logical $Z_L \otimes Z_L$ measurement}\label{fig:surface-code-merge}
	\end{subfigure}%
	\begin{subfigure}[t]{0.6\textwidth}
		\centering
		\includegraphics[width=0.8\textwidth]{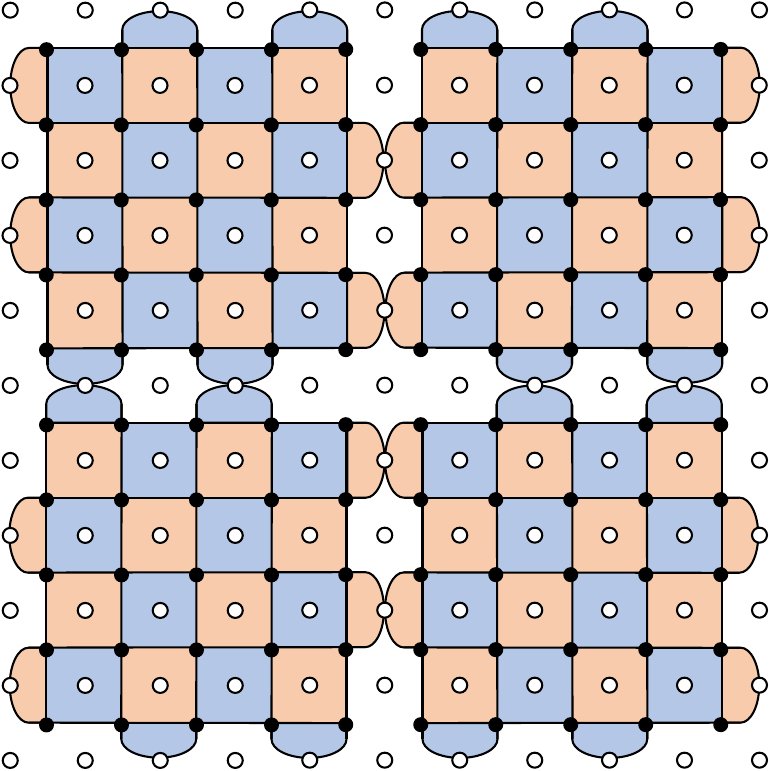}
		\caption{Patches tiling the plane}\label{fig:surface-code-plane}
	\end{subfigure}
	\caption{%
		(a) A logical $Z_L \otimes Z_L$ measurement is performed by lattice surgery in the following steps: (i) Stop measuring the weight-two stabilizers along the horizontal boundary between the patches. 
		(ii) Reliably measure the bulk faces for a single vertically-extended patch.
		Note that $Z_L\otimes Z_L$ can be inferred from the product of the outcomes of the newly measured red faces. 
		This temporarily merges the patches to form a single extended surface code patch.
		(iii) Reliably measure once more the weight-two faces along the horizontal boundary between the patches.
		This separates the pair of patches.
		(b) Two types of patches tile the plane, with $Z_L\otimes Z_L$ measurements possible between vertically neighboring patches, and $X_L\otimes X_L$ measurements possible between horizontally neighboring patches.
		}\label{fig:surfaceCodeOperations}
\end{figure}

Logical operations can be implemented fault-tolerantly on logical qubits encoded in surface codes. 
For example, a destructive logical $X$ measurement of a patch is implemented by measuring all data qubits in the $X$ basis, and then using a decoder to process the physical outcomes and reliably identify the logical measurement outcome.
Another important logical operation is the non-destructive measurement of a logical joint Pauli operator using an approach known as lattice surgery \cite{Horsman2012} as shown in \cref{fig:surface-code-merge}.
To simplify lattice surgery by lining up the boundary stabilizers of neighboring patches, we consider a tiling of the plane using two versions of distance $d$ surface code patches as shown in \cref{fig:surface-code-plane} which forms a grid of logical qubits. 
Logical $Z_L \otimes Z_L$ can be measured between vertical neighbor patches while $X_L \otimes X_L$ can be measured between horizontal neighbor patches.


\begin{figure}
    \centering
    \includegraphics[scale=0.7]{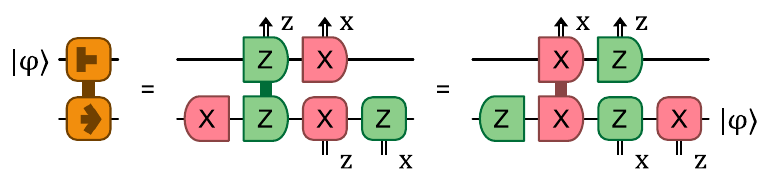}
    \caption{%
        The move operation can be implemented in depth 1 by local and neighboring Pauli measurements.
        A horizontal move can be implemented by preparing a single-qubit patch in $\ket{0}$,
        applying joint $XX$ measurement,
        and then measuring the original patch in the $Z$ basis (up to Pauli corrections).
        The vertical move follows from applying a Hadamard to the source qubit $\ket \phi$
        and a Hadamard on the output.
        Simplifying the circuit gives the right-hand side in the Figure,
        with a $ZZ$ measurement that is available vertically.
	}\label{fig:move}
\end{figure}

\begin{figure}
	\centering
	\begin{subfigure}{0.3\textwidth}
		\centering
		\includegraphics[width=\textwidth]{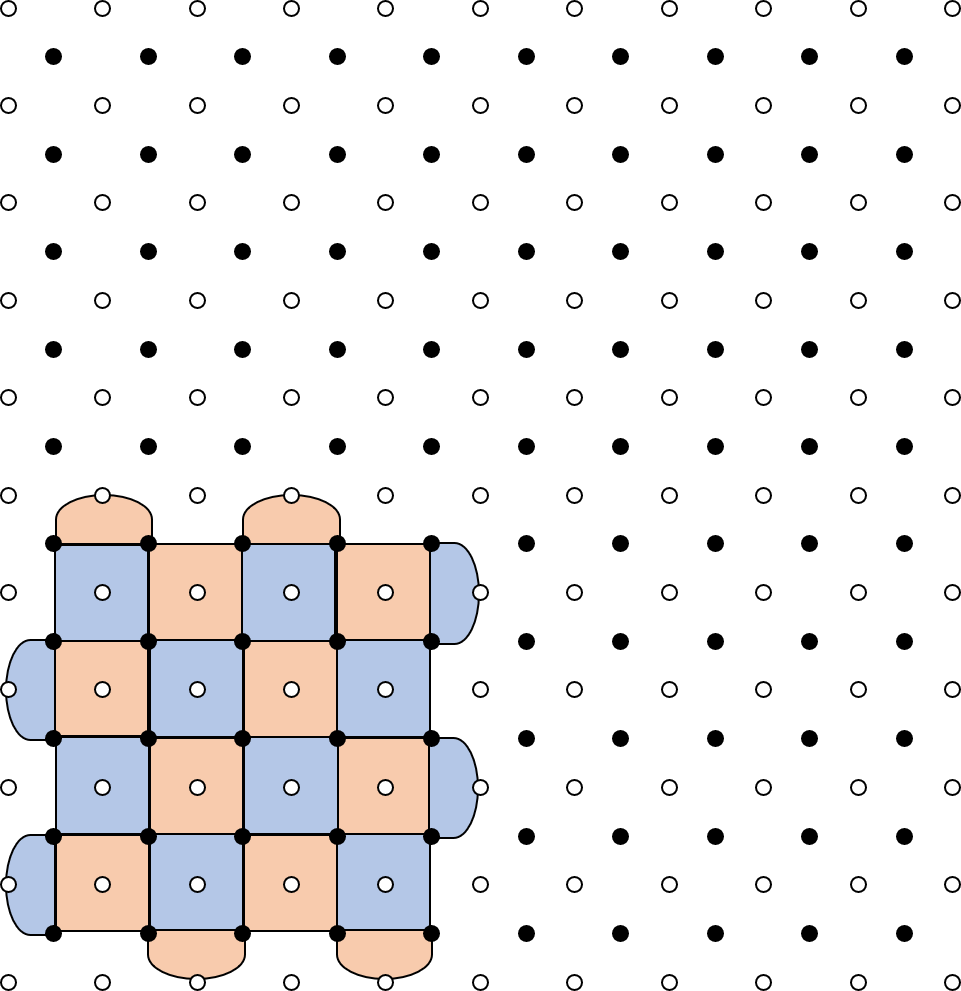}
		\caption{Transversal Hadamard}\label{fig:hadamard-step1}
	\end{subfigure}
	\hfill
	\begin{subfigure}{0.3\textwidth}
		\centering
		\includegraphics[width=\textwidth]{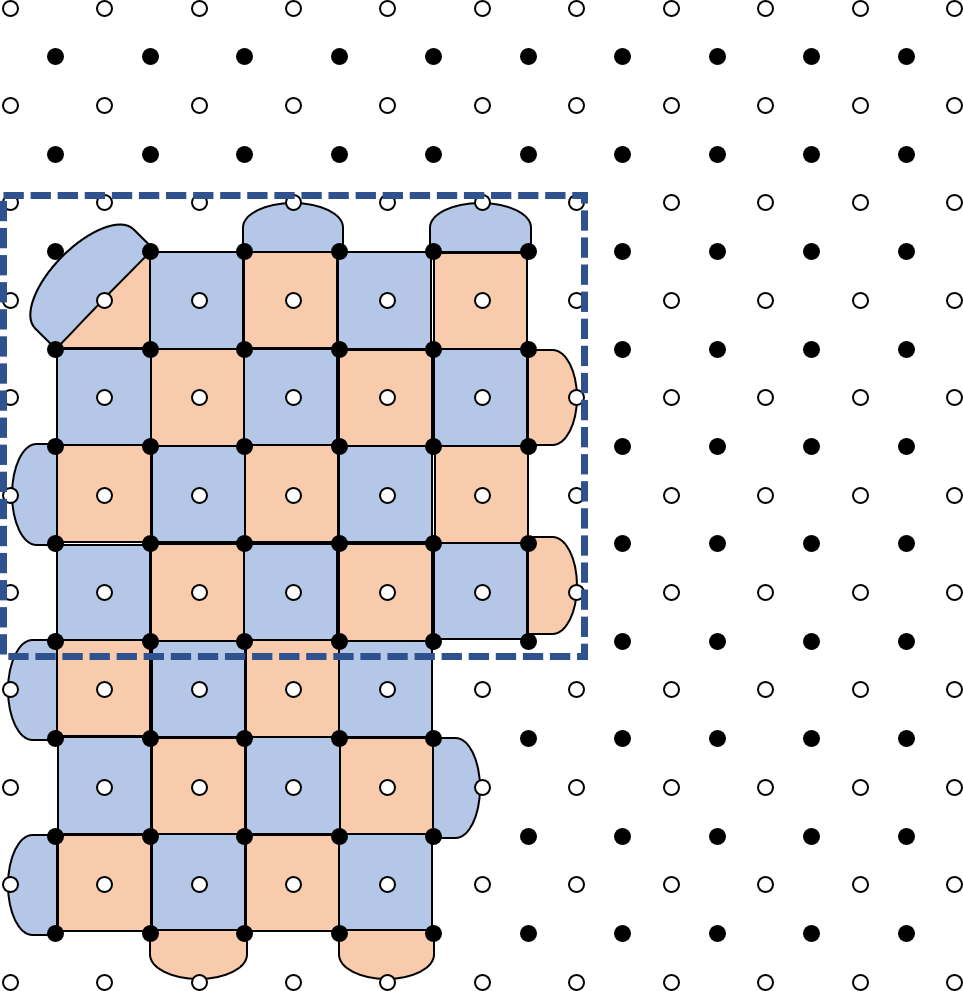}
		\caption{Extended patch}\label{fig:hadamard-step2}
	\end{subfigure}
	\hfill
	\begin{subfigure}{0.3\textwidth}
		\centering
		\includegraphics[width=\textwidth]{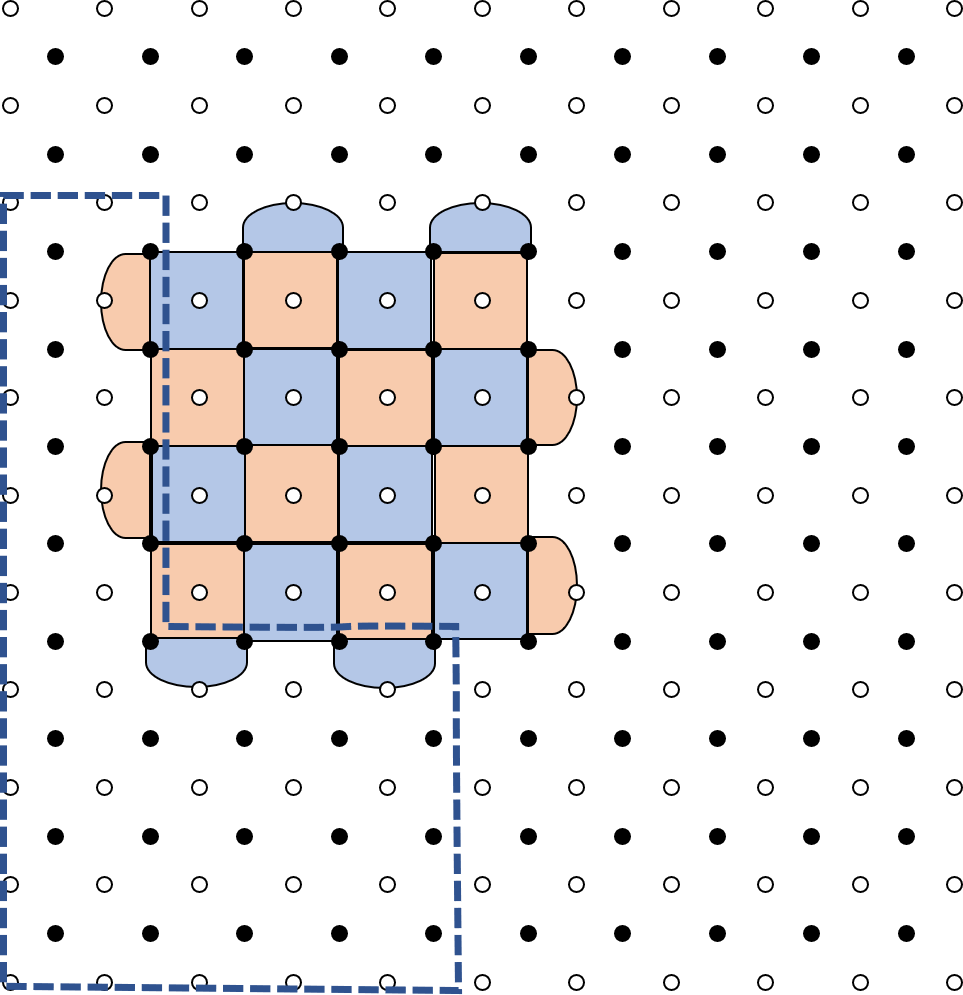}
		\caption{Shrunk patch}\label{fig:hadamard-step3}
	\end{subfigure}
	\caption{%
		Implementation of a Hadamard operation in depth 3 with three ancilla patches.
		(\subref{fig:hadamard-step1}) A transverse Hadamard is applied in depth 0 to each physical data qubit,
		which switches the arrangement of $X$ and $Z$ stabilizer generators compared to the standard configuration.
		(\subref{fig:hadamard-step2}) The patch is extended in depth 1 so that a segment of the standard boundary type 
		is introduced on the right.
		(\subref{fig:hadamard-step3}) The patch is shrunk into a standard surface code patch of the form of the top-left corner of the region (see \cref{fig:surface-code-plane}) in depth 1,
		but with its location shifted by a (code distance) $d$-independent amount.
		This allows us to shift the patch into the top-left corner in 0 depth (not shown).
		Then we move the logical qubit to the bottom-left corner in depth 1.
	}\label{fig:hadamard}
\end{figure}

\begin{figure}
        \centering
        \begin{subfigure}{\textwidth}
        	\centering
        	\includegraphics[scale=0.6]{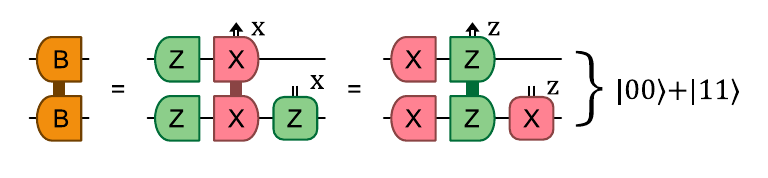}
        	\caption{Bell pair preparation}\label{fig:bell-operations-preparation}
        \end{subfigure}
        \begin{subfigure}{\textwidth}
        	\centering
        	\includegraphics[scale=0.6]{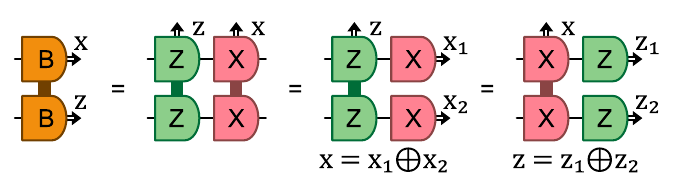}
        	\caption{Bell measurement}\label{fig:bell-operations-measurement}
        \end{subfigure}
    \caption{%
		We can implement Bell preparation and measurement in terms of single and two-qubit Pauli measurements
		in depth 1
		as given in \cref{fig:bell-operations}~\cite{Litinski2019}.
		(\subref{fig:bell-operations-preparation}) A Bell pair can be prepare from a (horizontal) joint $XX$ measurement of $\ket{00}$
		or a (vertical) joint $ZZ$ measurement of $\ket{++}$,
		up to Pauli corrections.
		(\subref{fig:bell-operations-measurement}) A destructive Bell measurement can be implemented
		by a joint $XX$ measurement followed by individual $Z$ basis measurements,
		or by a joint $ZZ$ measurement followed by individual $X$ basis measurements.
    }\label{fig:bell-operations}
\end{figure}

The allowed fault-tolerant logical operations that we assume throughout the paper and the resources they require are listed in \cref{fig:logical-layout}.
These are largely based on the rules specified in \cite{Litinski2019}.
Here we justify the resource requirements for the logical operations  in \cref{fig:logical-layout}
not covered in~\cite{Litinski2019}
on a distance-$d$ surface code.
For space analysis, we work in units of full surface code patches such that if any qubits from a patch are needed to implement an operation the full patch is counted.
We show how to implement the operations in terms of more elementary Pauli measurements.
The move operation can be implemented in depth 1 with the target qubit as ancilla, as shown in \cref{fig:move}.
The Hadamard can be implemented in depth three with three ancilla patched
along with the move operation as shown in \cref{fig:hadamard}.
Finally,
Bell measurement and preparation can be implemented in depth 1 as shown in \cref{fig:bell-operations}.

It is worth mentioning that there is considerable freedom in the detailed choice and implementation of the surface code which could have an impact on the space-time cost of logical operations, both at the physical level but also in some cases at the logical level.
For example the Hadamard could be performed using just one logical ancilla patch if each patch was padded with extra qubits.
We do not explore these alternatives here, but note that our EDPC algorithm can still be applied if these alternatives are used.

\section{Logical space time cost as a proxy for physical space time cost}%
\label{sec:logical-physical-cost}

Here we provide a justification for our use of logical space time cost as a proxy for physical space time cost.
As we have seen in \cref{fig:logical-layout} and \cref{sec:surfaceCode},
logical operations implemented with the surface code require physical time that scales as $d$ and physical space that scales as $d^2$. 
For a logical circuit written in terms of a total of $A_{\text{logical}}$ elementary logical operations implemented using surface codes of distance $d$,
the physical space-time cost $A_{\text{physical}}$ is approximately
\begin{equation}
A_{\text{physical}} \sim A_{\text{logical}} d^3.
\end{equation}
The probability of any of these elementary operations resulting in a logical failure scales as $p_{\text{fail}} \sim (p/p^*)^{d/2}$, where the fixed system parameters are the physical error rate $p$, and the fault-tolerant threshold for the surface code $p^*$.
Moreover, we assume $p_{\text{fail}} \sim 1/A_{\text{logical}}$ to ensure that the logical circuit is reliable with as small a code distance as possible. 
This suggests that the code distance behaves as
\begin{equation}
	d \sim \frac{2\log A_{\text{logical}}}{\log p^*-\log p}.
\end{equation}
Therefore we see that the physical and logical space time costs are monotonically related, i.e.,
\begin{equation}
A_{\text{physical}} \sim A_{\text{logical}} (\log A_{\text{logical}})^3.
\end{equation}

\section{\cnot{} via Bell operations}\label{sec:cnots-via-bell}
We list more variations of the standard \cnot{} gate (\cref{fig:cnotGate})
that use intermediate Bell preparation and measurements on ancillas
in \cref{fig:cnot-bell-variations}.
By choosing the right subcircuit,
we see that the long-range operations in \cref{fig:long-range-cnot} implement a \cnot{} gate.

\begin{figure}[htbp]
	\centering
	\begin{subfigure}[t]{0.5\textwidth}
		\centering
		\includegraphics[scale=0.6]{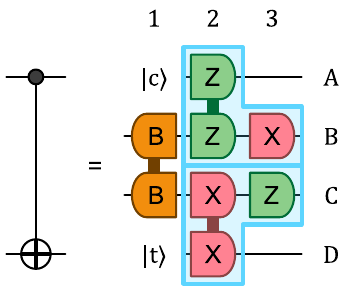}
		\caption{\cnot{} via Bell preparation}
		\label{fig:cnot-bell}
	\end{subfigure}%
	\begin{subfigure}[t]{0.5\textwidth}
		\centering
		\includegraphics[scale=0.6]{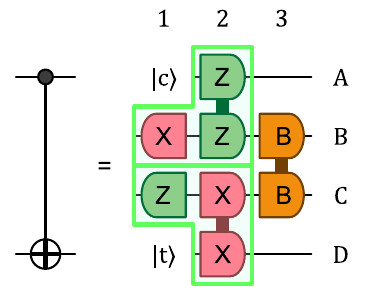}
		\caption{\cnot{} via Bell measure}
	\end{subfigure}
	\begin{subfigure}[t]{0.5\textwidth}
		\centering
		\includegraphics[scale=0.6]{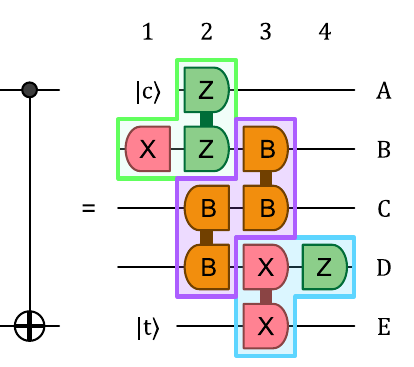}
		\caption{\cnot{} with control first}
	\end{subfigure}%
	\begin{subfigure}[t]{0.5\textwidth}
		\centering
		\includegraphics[scale=0.6]{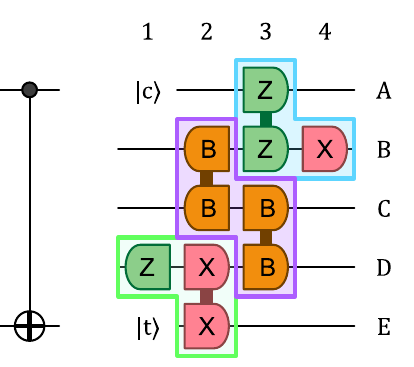}
		\caption{\cnot{} with target first}
	\end{subfigure}
	\caption{%
		Various implementations of a \cnot{} gate with intermediate ancilla qubits and Bell operations.
		In particular, we are able to apply the control and the target
		either before (green) or after (teal) Bell preparation and measurement steps,
		while keeping the depth at 2.
	}\label{fig:cnot-bell-variations}
\end{figure}

\section{Remote execution of diagonal gates}\label{sec:RemoteGate}
\begin{figure}
    \centering
    \begin{subfigure}[b]{0.48\textwidth}
    	\centering
    	\includegraphics[width=\textwidth]{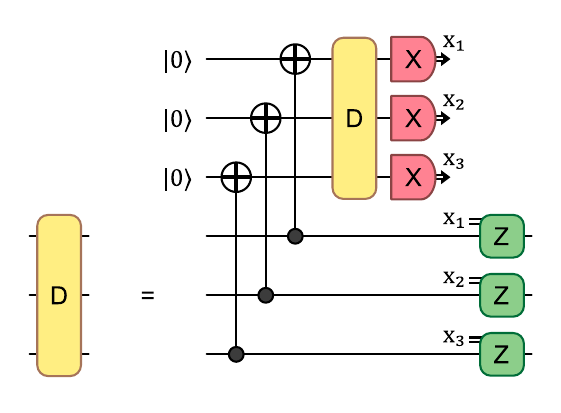}
    	\caption{Diagonal gate in computational basis}\label{fig:remote-diagonal-computational}
    \end{subfigure}\hfill
    \begin{subfigure}[b]{0.48\textwidth}
    	\centering
    	\includegraphics[width=\textwidth]{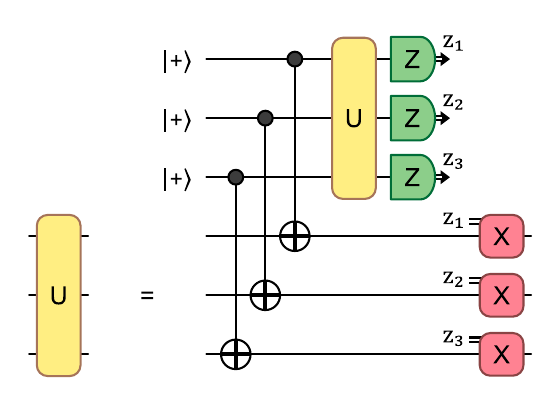}
    	\caption{Diagonal gate in Hadamard basis}\label{fig:remote-diagonal-hadamard}
    \end{subfigure}
	\caption{%
		(\subref{fig:remote-diagonal-computational}) Any $k$-qubit gate diagonal in the computational basis
		can be remotely executed on $k$ dedicated ancilla by first using \cnot{}s.
		We use this technique to apply remote $Z(\theta)$ rotations (\cref{fig:delayed-remote-gate})
		with magic states at the boundary.
		(\subref{fig:remote-diagonal-hadamard}) Similarly, gates diagonal in the Hadamard basis
		also have a remote implementation.
		Since the Pauli corrections can be commuted through Clifford circuits,
		Clifford circuits can be executed immediately after executing the \cnot{} operations
		with no need to wait on the remote operations.
	}\label{fig:remote-diagonal}
\end{figure}

A gate $D$ diagonal on $k$ source qubits in the computational basis can be executed 
on $k$ ancilla by first entangling these ancilla qubits using \cnot{}s.
We call this \emph{remote} execution.
Let the computational basis be $\ket{\ell}$, for $\ell \in [2^k]$,
then $D\ket{\ell} = \exp(i\phi_\ell) \ket{\ell}$.
We saw one use for remote gates
in applying rotations at the boundary requiring magic states (\cref{sec:magicState}).

We execute $D$ remotely as follows (see \cref{fig:remote-diagonal}).
First, we initialize the ancilla in the state $\ket{0}^{\otimes k}$.
Let the source qubits be in some pure state $\sum_\ell \alpha_\ell \ket{\ell}$,
for $\alpha_\ell \in \mathbb C$.
then we apply $k$ transversal \cnot{}
gates controlled on source qubits so that the overal state becomes $\sum_\ell \alpha_\ell \ket{\ell}\otimes\ket{\ell}$.
We now apply $D$ to the ancilla instead
\begin{equation}
	(\idm \otimes D)\sum_\ell \alpha_\ell \ket{\ell}\otimes\ket{\ell} =
	\sum_\ell \alpha_\ell \exp(i\phi_\ell)\ket{\ell}\otimes\ket{\ell}.
\end{equation}
We now disentangle the ancilla by measuring them in the $X$ basis.
Let the measurement give outcomes $\vec x \in \set{0,1}^k$,
then the state on the source qubits is mapped to
\begin{equation}
	\sum_\ell \alpha_\ell \exp(i\phi_\ell) (-1)^{(\vec x,\ell)}\ket{\ell},
\end{equation}
where $(\vec x,\ell)$ is the inner product modulo 2 between $\vec x$
and the binary representation of $\ell$.
Applying a $Z$ correction to each qubit $j \in [k]$ controlled on measurement result $\vec x_j$
maps the state to $\sum_\ell \alpha_\ell  \exp(i\phi_\ell) \ket{\ell}$ as required.

This technique can be extended to any unitary operator $U$ 
since it can be unitarily diagonalized as $U = VDV^\dagger$ by the spectral theorem,
for $V$ unitary and $D$ diagonal operators.
A particularly simple case are unitary operators that are diagonal in the Hadamard basis,
where $V = H^{\otimes k}$.
We write $U = H^{\otimes k} D H^{\otimes k}$ on the source qubits
and apply remote execution of $D$ using our techniques above.
We then simplify the circuit to obtain \cref{fig:remote-diagonal-hadamard}.

\section{EDPC implementation}\label{app:edpcImplementation}
Here we provide \cref{alg:edpcImplementation}, which specifies the implementation of EDPC used for our numerical results presented in \cref{sec:numerical-results},
here called EDPCI for clarity.
EDPCI differs slightly from EDPC (\cref{sec:EDPCalogrithm}) and we will highlight the differences.
Up until \cref{line:impl-rotations} in \cref{alg:edpcImplementation}, EDPCI is the same as EDPC.
Then, EDPCI greedily attempts to execute long-range \cnot{}s earlier than would occur in EDPC.
In particular, EDPC only executes \cnot{}s after all available rotations have been executed,
whereas EDPCI finds a set $\mathcal P_c$ on \cref{line:impl-cnot}
such that $\mathcal P_c \cup \mathcal P_m$ forms an EDP set.
Now EDPCI concurrently executes long-range \cnot{}s using any edges left over from remote rotations.
Moreover, we note that EDPC uses the bounded $\mathcal T$-operator set algorithm (\cref{alg:boundedTOperator})
to execute a parallel \cnot{} circuit,
which additionally finds a bounded $\mathcal T$-operator set $\mathcal Q_1$ on \cref{line:edpc-q1},
whereas EDPCI only finds $\mathcal Q_2$ from the bounded $\mathcal T$-operator set algorithm if given a parallel \cnot{} circuit.
As a consequence of this difference, while a parallel input \cnot{} circuit is guaranteed to compile to an output circuit upper bounded by $\bigo{\sqrt n}$, EDPCI does not have this guarantee.

\begin{algorithm}[htbp]
	\caption{%
		\emph{EDPC implementation}:
		The EDPC algorithm (\cref{alg:edpc}) differs from our implementation in that it greedily tries to execute \cnot{}s earlier.
	}\label{alg:edpcImplementation}
	\Input{Circuit $\mathcal C$ with Paulis commuted to the end and merged with measurement}

	\While{available operations in $\mathcal C$}{%
		\execute all available state preparation, measurement, and Hadamard\;

		$\mathcal G_m \gets \set{\text{available operations } g \in \mathcal C \mid g \text{ is } S,T, S_x,$ \text{or} $T_x}$\;
		$\mathcal G_c \gets \set{\text{available operations } g \in \mathcal C \mid g \text{ is \cnot{}}}$\;
		\While{$\mathcal G_m \cup \mathcal G_c \neq \emptyset$\label{alg:longRangeWhile}}{%
			$G \gets$ surface code grid graph\;
			$\mathcal P_m \gets$ \KwMaxRotations{$\mathcal G_m$}\tcp*{see \cref{alg:magic-rotations}}\label{line:impl-rotations}
			remove edges in each $P \in \mathcal P_m$ from $G$\;

			$\mathcal P_c \gets$ approximate max operator EDP set on $G$ with $\mathcal G_c$\label{line:impl-cnot}\;

			\execute concurrent remote rotations along $\mathcal P_m$ and long-range \cnot{}s along $\mathcal P_c$ using EDP subroutine\;
			remove executed rotations from $\mathcal G_m$ and \cnot{}s from $\mathcal G_c$\;
		}
	}
\end{algorithm}

\end{document}